\documentclass[12pt]{article}

\usepackage{adjustbox}
\usepackage[left=2cm, right=2cm, top=2cm, bottom=2cm]{geometry}
\usepackage{enumitem}
\usepackage{graphicx}
\usepackage{subfigure}
\usepackage{bm}
\usepackage{amsmath, amsfonts, amssymb, amsthm, mathtools}
\usepackage{tikz}
\usepackage{array}
\usepackage{multirow}
\usepackage{cancel}
\usepackage{appendix}
\usetikzlibrary{decorations.pathreplacing}
\newtheorem{theorem}{Theorem}[section]
\newtheorem{lemma}{Lemma}[section]

\newcommand\Tau{\mathcal{T}}

\usepackage{amssymb}





\usepackage[figuresright]{rotating}




\begin{document}

\title{A Weight Function Lemma Heuristic\\ for Graph Pebbling \footnote{This work was supported by CAPES, CNPq, and FAPERJ.}}




\author{G. A. Bridi, F. L. Marquezino, C. M. H. de Figueiredo\\
Federal University of Rio de Janeiro, Brazil\\
\{gabridi, franklin, celina\}@cos.ufrj.br
}

\date{}

\maketitle

\begin{abstract}

Graph pebbling is a problem in which pebbles are distributed across the vertices of a graph and moved according to a specific rule: two pebbles are removed from a vertex to place one on an adjacent vertex. The goal is to determine the minimum number of pebbles required to ensure that any target vertex can be reached, known as the pebbling number. Computing the pebbling number lies beyond NP in the polynomial hierarchy, leading to bounding methods. One of the most prominent techniques for upper bounds is the Weight Function Lemma (WFL), which relies on costly integer linear optimization. To mitigate this cost, an alternative approach is to consider the dual formulation of the problem, which allows solutions to be constructed by hand through the selection of strategies given by subtrees with associated weight functions. To improve the bounds, the weights should be distributed as uniformly as possible among the vertices, balancing their individual contribution. However, despite its simplicity, this approach lacks a formal framework. To fill this gap, we introduce a novel heuristic method that refines the selection of balanced strategies. The method is motivated by our theoretical analysis of the limitations of the dual approach, in which we prove lower bounds on the best bounds achievable. Our theoretical analysis shows that the bottleneck lies in the farthest vertices from the target, forcing surplus weight onto the closer neighborhoods. To minimize surplus weight beyond the theoretical minimum, our proposed heuristic prioritizes weight assignment to the farthest vertices, building the subtrees starting from the shortest paths to them and then filling in the weights for the remaining vertices. Applying our heuristic to Flower snarks and Blanuša snarks, we improve the best-known upper bounds, demonstrating the effectiveness of a structured strategy selection when using the WFL.\\
\textbf{Keywords:} Graph Pebbling, Weight Function Lemma, Snarks, Heuristics, Graph Theory
\end{abstract}

\section{Introduction}

Graph pebbling is a mathematical game with origins in number theory that has evolved into a broader combinatorial study with applications not only in number theory, but also in optimization, graph theory, probability theory, and computer science~\cite{recent_survey}. The game can be seen as a transportation problem, where pebbles representing resources to be delivered to some destination are placed at the vertices of a graph and moved through the edges. Each pebbling move requires spending two pebbles, but only one reaches the adjacent vertex, like a toll paid along the way. The pebbling number of the graph is the minimum number of pebbles needed to guarantee that, no matter how the pebbles are placed on the vertices of the graph, at least one pebble can reach any chosen target vertex---the destination---by some combination of pebbling moves. 

The problem of graph pebbling is known to be $\Pi_2^P$-complete~\cite{pebbling_completeness2}, placing it beyond NP in the polynomial hierarchy---thus making the exact computation of the pebbling number infeasible for large graphs. Consequently, various techniques have been developed to delimit the pebbling number through bounds. One of them is the Weight Function Lemma (WFL), introduced by Hurlbert~\cite{WFL}, which provides a useful method for deriving upper bounds. As discussed in Hurlbert~\cite{WFL}, applying WFL is not a trivial task, since it involves an integer linear optimization problem whose optimal solution provides a direct upper bound for the pebbling number. Although the associated optimization problem can be solved for small graphs---and since in some cases the relaxation is sufficient, somewhat larger graphs can be handled---the problem becomes very large when scaling the size of the graphs. In general, solving integer linear problems is NP-complete~\cite{ILP}.

An alternative approach, explored in Refs.~\cite{WFL_snarks, WFL_kneser, pebbling_split}, is to work with the dual problem to obtain (at least) near-optimal upper bounds, as it can yield solutions with short certificates in some cases. A key advantage of the dual formulation is the possibility of constructing solutions without relying on optimization techniques, allowing certificates to be obtained by hand, leveraging the combinatorial structure of the graph. Specifically, obtaining a solution to the dual problem involves selecting a set of strategies, where each strategy consists of a subtree rooted at the target vertex and is associated with a weight function that assigns non-negative values to each vertex in the graph. Better bounds are obtained when the sum of the weights of each of the strategies is distributed as evenly as possible among the vertices, balancing the overall contribution required from each vertex. The strength of this combinatorial approach lies in its simplicity. However, it lacks a formal framework to guide its systematic application.

In this sense, we introduce a novel heuristic method to build strategies, motivated by a theoretical analysis of the limitations of the dual formulation of the WFL. Our analysis establishes lower bounds on the best achievable upper bounds in this formulation, revealing structural sources of imbalance in the weight distribution over the vertices. These insights guide the design of our heuristic, which we apply to the Blanuša and Flower snarks, improving the upper bounds recently obtained by Adauto et al.~\cite{WFL_snarks}. Their work is a pioneering study on the pebbling number of snark graphs, and by focusing on the same graph families, we provide a direct and fair comparison between our heuristic results and the state of the art. Our results highlight the potential of guided strategy construction for improving bounds in challenging graph families.

\paragraph{Snarks}

In this work, we assume that $G = (V, E)$ is a simple connected graph. The number of vertices of $G$ is $n(G)$. The distance between vertices $u$ and $v$ in a graph $G$ is denoted by $d_G(u, v)$. The eccentricity of a vertex $v$, represented by $e(v)$, is the maximum distance $d_G(u, v)$ for any $u \in V(G)$. The largest $e(v)$ of $G$ is the diameter $D(G)$. For $j \in \mathbb{N}$ and a vertex $v$, the $j^{\text{th}}$ neighborhood of $v$, denoted by $N_j(v)$, is the set of vertices $u$ such that $d_G(u, v) = j$. In particular, we denote the first neighborhood (or simply the neighborhood) of $v$ as $N(v)$, while the $e(v)^{\text{th}}$ neighborhood of $v$, expressed as $P(v)$, is the set of $v$-peripheral vertices.

Snarks~\cite{survey_snarks} are a family of graphs that are cubic, bridgeless, $4$-edge-chromatic, defined in the context of the Four Color Theorem, which holds if and only if no snark is planar~\cite{fourColors}. The Petersen graph, discovered in 1898~\cite{petersen}, is the first and smallest snark, with $10$ vertices. Shortly after, in 1946, Danilo Blanuša introduced the Blanuša 1 and 2 snarks, called $B_1$ and $B_2$~\cite{survey_snarks}, both with $18$ vertices and diameter~$4$. The present work considers the $B_2$ snark, shown in Figure~\ref{fig:B2_x3}(a). There are no snarks of order $12$, $14$, $16$, whereas snarks exist for any even order greater than $16$. For a detailed history, see Ref.~\cite{survey_snarks}. We also consider the family of Flower snarks $J_m$~\cite{Campos}, defined for odd $m = 2k+1, m \geq 3$. For each $i\in\pm\{0,1,\ldots,k\}$, we have vertices $v_i$, $x_i$, $y_i$, and $z_i$. The vertices $v_i$, $x_i$, and $y_i$ are all adjacent to $z_i$, while the vertices $\{v_i\}$ form the cycle $C_m$, with adjacencies given by consecutive indices modulo $m$. The vertices $\{x_i\}$ (resp. $\{y_i\}$) form a path given by the cycle with the edge $x_kx_{-k}$ (resp. $y_ky_{-k}$) removed. Finally, we add the edges $x_ky_{-k}$ and $y_kx_{-k}$ so that the two paths now form one cycle $C_{2m}$. We have $n(J_m) = 4m$ and $D(J_m) = k+2$.

\paragraph{Graph pebbling}
A configuration $C$ on a graph $G$ is a function $C: V(G) \rightarrow \mathbb{N}$ such that $C(v)$ is the number of pebbles at the vertex $v \in V(G)$. The total number of pebbles in the graph, $\sum_{v \in V(G)} C(v)$, is denoted by $|C|$. A pebbling move consists of removing $2$ pebbles from a vertex $v$ and adding $1$ pebble to a vertex $u \in N(v)$. From the configuration $C$, the goal of the graph pebbling problem is to get a combination of pebbling moves that reaches a target vertex $r \in V(G)$, i.e., that places a pebble on $r$. If such a combination of moves exists, we say that the configuration $C$ is $r$-solvable. Otherwise, $C$ is $r$-unsolvable. The problem of deciding on the solvability of a configuration is known to be NP-Complete~\cite{pebbling_completeness1, pebbling_completeness2}. The pebbling number of the target vertex $r$, denoted by $\pi(G, r)$, is the minimum number of pebbles $t$ such that all configurations with $t$ pebbles are $r$-solvable. The pebbling number of the graph, represented by $\pi(G)$, is the minimum number of pebbles $t$ such that all configurations with $t$ pebbles are $r$-solvable for all $r \in V(G)$, i.e., $\pi(G) = \max_{r \in V(G)} \{ \pi(G,r) \}$. For any graph, the pebbling number is delimited by the well-known basic bounds $\max \{ n(G), 2^{D(G)} \} \leq \pi(G) \leq (n(G) - D(G))(2^{D(G)} - 1) + 1$~\cite{survey_pebbling, basic_bound}. 

The WFL is stated in Lemma~\ref{l:WFL}. Before we proceed, we need some definitions. Let $T$ be a subtree of $G$ rooted in the target vertex $r \in V(G)$ such that $n(T) \geq 2$. The subtree $T$ is called an $r$-strategy when it is associated with a non-negative function $\omega_T$, called weight function, such that $\omega_T(v) = 0$ if $v \notin V(T)$ or $v = r$, and $\omega_T(u) \geq 2 \omega_T(v)$ for any pair of vertices $u, v \in V(T) - \{r\}$ such that $u$ is the parent of $v$ in $T$. We denote the sum of the weights on the vertices of the subtree as $|\omega_T| = \sum_{v \in V(G) - \{r \}} \omega_T(v)$. The definition of the weight extends to configurations as, for a configuration $C$, $\omega_T(C) = \sum_{v \in V(G) - \{r \}} \omega_T(v) C(v)$. 

\begin{lemma} [Weight Function Lemma~\cite{WFL}] \label{l:WFL}
Let $T$ be a $r$-strategy of the graph $G$ with associated weight function $\omega_T$ and let $C$ be an $r$-unsolvable configuration on $G$. Then, $\omega_T(C) \leq |\omega_T|$.
\end{lemma}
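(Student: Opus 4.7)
The plan is to construct, from the given $r$-unsolvable configuration $C$, a reachable configuration $C'$ (so in particular $C'$ is also $r$-unsolvable) such that $\omega_T(C') \ge \omega_T(C)$ and $C'(v) \le 1$ for every $v \in V(T) - \{r\}$. Once such a $C'$ is in hand, the lemma follows from the chain
\[
\omega_T(C) \;\le\; \omega_T(C') \;=\; \sum_{v \in V(G) - \{r\}} \omega_T(v)\, C'(v) \;\le\; \sum_{v \in V(G) - \{r\}} \omega_T(v) \;=\; |\omega_T|,
\]
using $\omega_T(v) = 0$ for $v \notin V(T)$ to discard contributions from outside the tree.

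The construction is a greedy pebbling-up process on $T$: while some $v \in V(T) - \{r\}$ has $C(v) \ge 2$ and its parent $u$ in $T$ satisfies $u \ne r$, perform the pebbling move from $v$ to $u$, then repeat. The process terminates because each move strictly decreases the total number of pebbles. The weight inequality $\omega_T(u) \ge 2\omega_T(v)$ from the definition of an $r$-strategy applies exactly to the pairs that occur in the executed moves, so each such move changes $\omega_T$ by $\omega_T(u) - 2\omega_T(v) \ge 0$. Hence $\omega_T$ is non-decreasing throughout the process, yielding $\omega_T(C') \ge \omega_T(C)$.

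What remains is to verify $C'(v) \le 1$ for every $v \in V(T) - \{r\}$. If the parent of $v$ in $T$ is different from $r$, this is immediate from the stopping criterion. The delicate case is when $v$ is a child of $r$ in $T$, because the greedy procedure deliberately avoids pebbling into $r$. For these vertices I would invoke the fact that $r$-unsolvability is preserved by pebbling moves, so the reachable $C'$ is still $r$-unsolvable; since each child of $r$ in $T$ is a neighbour of $r$ in $G$, having $C'(v) \ge 2$ at any such $v$ would allow one further move that places a pebble on $r$, contradicting unsolvability.

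I expect the main obstacle to be exactly this last step: the weight condition is silent about children of the root (pairs including $r$ are excluded from the inequality), so one must lean entirely on the unsolvability hypothesis together with its monotonicity under moves, rather than on any property of the weight function itself. Once this boundary case is handled, termination and weight monotonicity are routine, and the concluding inequality is the displayed arithmetic.
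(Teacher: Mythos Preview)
The paper does not supply its own proof of Lemma~\ref{l:WFL}; it is quoted as a known result from Hurlbert~\cite{WFL}. Your argument is correct and is essentially the standard proof: greedily pebble upward in $T$ (stopping short of $r$), use the parent-child inequality $\omega_T(u)\ge 2\omega_T(v)$ to show $\omega_T$ is non-decreasing under each such move, and invoke $r$-unsolvability of the reachable configuration to rule out $\ge 2$ pebbles on any child of $r$. Termination is immediate since the pebble count drops, and the final inequality is the displayed arithmetic. There is no gap; the ``delicate case'' you flag is handled exactly as you indicate, since children of $r$ in the subtree $T$ are neighbours of $r$ in $G$.
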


The upper bound on the pebbling number achievable by WFL is given by $\pi(G, r) \leq \zeta_{G, r} + 1$, where $\zeta_{G, r}$ is the optimal solution of the integer linear optimization problem, given by the maximization of $\sum_{v \in V(G) - \{ r\}} C(v)$ subject to $\omega_T(C) \leq |\omega_T| \ \forall T \in \Tau_{\text{set}}$, with $\Tau_{\textit{set}}$ being the set of all possible $r$-strategies of $G$. Note that only the optimal solution can generate a valid upper bound on the pebbling number. In contrast, we can get upper bounds from any solution of the associated dual problem, given by the minimization of $\sum_{T \in \Tau_{\text{set}}} \alpha_T |\omega_T|$ subject to $\sum_{T \in \Tau_{\text{set}}} \alpha_T \omega_T(v) \geq 1 \ \forall v \in V(G) - \{r \}$, where $\alpha_T \geq 0$ for all $T \in \Tau_{\text{set}}$. 

We can manipulate the dual problem as follows. Let $\Tau \subseteq \Tau_{\text{set}}$ be the set of subtrees in which $\alpha_T > 0$. Since we can rescale the weight function by a constant factor, let $1/\alpha = \alpha_T$ for all $T \in \Tau$. Extending the weight function to the set $\Tau$, we have $\omega_\Tau(v) = \sum_{T \in \Tau} \omega_T(v)$ and $|\omega_\Tau| = \sum_{T \in \Tau} |\omega_T|$. The dual problem is reduced to minimizing $|\omega_\Tau|/\alpha$ subject to $\omega_{\Tau}(v) \geq \alpha \ \forall v \in V(G) - \{r \}$. The optimal value of $\alpha$ is $\omega_{\text{min}} = {\min}_{v \in V(G) - \{ r \}} \{ \omega_\Tau(v) \}$---the minimum weight---and we need to minimize $\lambda_\Tau = |\omega_\Tau|/\omega_{\text{min}}$. The quantity $\lambda_\Tau$, which we call the WFL ratio of $\Tau$, provides the upper bound $\pi(G, r) \leq \lfloor \lambda_\Tau \rfloor + 1$. The optimal solution of the dual problem $\lambda_{G, r} = {\min}_{\Tau \subseteq \Tau_{\textit{set}}} \{ \lambda_\Tau \}$---the minimum WFL ratio---provides the best dual bound on the pebbling number of the target $r$, while $\lambda_{G} = {\max}_{r \in V(G)} \{ \lambda_{G, r} \}$ does the same for the graph $G$. The lower bounds on the pebbling number result in direct lower bounds on the WFL ratio. For instance, $\pi(G) \geq \max \{ n(G), 2^{D(G)} \}$ implies that $\lambda_G \geq \max \{ n(G), 2^{D(G)} \} - 1$. Figure~\ref{fig:petersen} illustrates the dual approach applied to the Petersen graph.

\begin{figure}[ht]
    \centering
    \begin{minipage}{0.32\textwidth}
        \centering
        \subfigure[]{
            \begin{tikzpicture}[scale=0.78]
            
            \tikzstyle{every node}=[circle, draw, fill=black, inner sep=1.5pt]
            
            \node (A1) at (-0.00, 2.00) [label=above:{\small $r$}] {}; 
            \node (A2) at (-1.90, 0.62) [label=left:{\small $4$}]{};
            \node (A3) at (-1.18, -1.62) [label=left:{\small $2$}] {};
            \node (A4) at (1.18, -1.62) [label=right:{\small $1$}]{};
            \node (A5) at (1.90, 0.62) {};
            \node (B1) at (-0.00, 1.00) {};
            \node (B2) at (-0.95, 0.31) [label=above:{\small $2$}] {};
            \node (B3) at (-0.59, -0.81) [label=left:{\small $1$}] {};
            \node (B4) at (0.59, -0.81) [label=right:{\small $1$}]{};
            \node (B5) at (0.95, 0.31) [label=above:{\small $1$}]{};
    
            \draw (A1) -- (A2);
            \draw (A2) -- (A3);
            \draw (A3) -- (A4);
            \draw[dashed] (A4) -- (A5);
            \draw[dashed] (A5) -- (A1);
            \draw[dashed] (B1) -- (B3);
            \draw[dashed] (B3) -- (B5);
            \draw (B5) -- (B2);
            \draw (B2) -- (B4);
            \draw[dashed] (B4) -- (B1);
            \draw[dashed] (A1) -- (B1);
            \draw (A2) -- (B2);
            \draw (A3) -- (B3);
            \draw[dashed] (A4) -- (B4);
            \draw[dashed] (A5) -- (B5);
            
            \end{tikzpicture}
        }
    \end{minipage}
    \begin{minipage}{0.32\textwidth}
        \centering
        \subfigure[]{
            \begin{tikzpicture}[scale=0.78]

            \tikzstyle{every node}=[circle, draw, fill=black, inner sep=1.5pt]
            
            \node (C1) at (-6.00, 2.00) [label=above:{\small $r$}]{};
            \node (C2) at (-7.90, 0.62) {}; 
            \node (C3) at (-7.18, -1.62) [label=left:{\small $1$}]{};
            \node (C4) at (-4.82, -1.62) [label=right:{\small $1$}]{};
            \node (C5) at (-4.10, 0.62) {};
            \node (D1) at (-6.00, 1.00) [label=left:{\small $4$}] {};
            \node (D2) at (-6.95, 0.31) [label=above:{\small $1$}]{};
            \node (D3) at (-6.59, -0.81) [label=left:{\small $2$}]{};
            \node (D4) at (-5.41, -0.81) [label=right:{\small $2$}]{};
            \node (D5) at (-5.05, 0.31) [label=above:{\small $1$}]{};
    
            \draw[dashed] (C1) -- (C2);
            \draw[dashed] (C2) -- (C3);
            \draw[dashed] (C3) -- (C4);
            \draw[dashed] (C4) -- (C5);
            \draw[dashed] (C5) -- (C1);
            \draw (D1) -- (D3);
            \draw (D3) -- (D5);
            \draw[dashed] (D5) -- (D2);
            \draw (D2) -- (D4);
            \draw (D4) -- (D1);
            \draw (C1) -- (D1);
            \draw[dashed] (C2) -- (D2);
            \draw (C3) -- (D3);
            \draw (C4) -- (D4);
            \draw[dashed] (C5) -- (D5);
            
            \end{tikzpicture}
        }
    \end{minipage}
    \begin{minipage}{0.32\textwidth}
        \centering
        \subfigure[]{
            \begin{tikzpicture}[scale=0.78]
            
            \tikzstyle{every node}=[circle, draw, fill=black, inner sep=1.5pt]

            \node (E1) at (-6.00, 2.00) [label=above:{\small $r$}]{};
            \node (E2) at (-7.90, 0.62) {};
            \node (E3) at (-7.18, -1.62) [label=left:{\small $1$}]{};
            \node (E4) at (-4.82, -1.62) [label=right:{\small $2$}]{};
            \node (E5) at (-4.10, 0.62) [label=right:{\small $4$}]{};
            \node (F1) at (-6.00, 1.00) {};
            \node (F2) at (-6.95, 0.31) [label=above:{\small $1$}]{};
            \node (F3) at (-6.59, -0.81) [label=left:{\small $1$}]{};
            \node (F4) at (-5.41, -0.81) [label=right:{\small $1$}]{};
            \node (F5) at (-5.05, 0.31) [label=above:{\small $2$}]{};
    
            \draw[dashed] (E1) -- (E2);
            \draw[dashed] (E2) -- (E3);
            \draw (E3) -- (E4);
            \draw (E4) -- (E5);
            \draw (E5) -- (E1);
            \draw[dashed] (F1) -- (F3);
            \draw (F3) -- (F5);
            \draw (F5) -- (F2);
            \draw[dashed] (F2) -- (F4);
            \draw[dashed] (F4) -- (F1);
            \draw[dashed] (E1) -- (F1);
            \draw[dashed] (E2) -- (F2);
            \draw[dashed] (E3) -- (F3);
            \draw (E4) -- (F4);
            \draw (E5) -- (F5);
            
            \end{tikzpicture}
        }
    \end{minipage}
    \caption{Adapted representation from Ref.~\cite{WFL} of the optimal $r$-strategies highlighted in solid lines (a) $T_1$, (b) $T_2$, and (c) $T_3$ for the Petersen graph $P$. The Petersen graph is vertex-transitive, allowing us to consider the vertex $r$ as the only considered target. Note that $|\omega_\Tau| = 36$, $\omega_{\text{min}} = 4$, and $\lambda_\Tau = 9$. Consequently, $\pi(P) \leq 10$. Since $n(P) = 10$, it follows that $\pi(P) = 10$ and $\lambda_P = 9$.}
    \label{fig:petersen}
\end{figure}
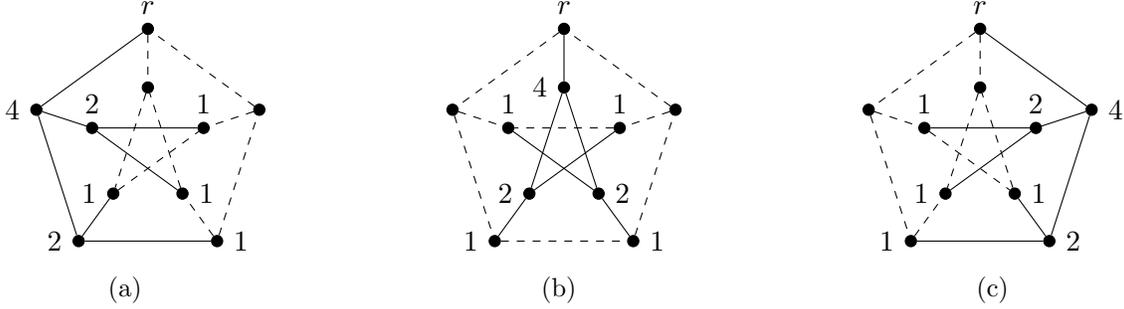

\section{Limits of Weight Function Lemma}

Minimizing the WFL ratio $\lambda_\Tau$ involves finding a set of strategies $\Tau$, that we call certificate, that better balances the weight of the function $\omega_\Tau$ over the vertices, 
making $\omega_\Tau$ as uniform as possible. Assuming the most favorable scenario, i.e., $\omega_\Tau(v) = \omega_{\text{min}} \ \forall v \in V(G) - \{ r \}$, gives the trivial lower bound $\lambda_{G, r} \geq n(G) - 1$. When there is a certificate $\Tau$ with this property---such as the case of the Petersen graph---we have that $\pi(G, r) = n(G)$, a fact known as the Uniform Covering Lemma~\cite{WFL}. Observe that we can write the WFL ratio $\lambda_\Tau$ by adding to $n(G) - 1$ the weight at each vertex that exceeds $\omega_{\text{min}}$ and then divide by $\omega_{\text{min}}$. That way,
\begin{equation}
    \label{eq:sur}
\lambda_\Tau = n(G) - 1 + \frac{\sum_{v \in V(G) - \{r \}} (\omega_\Tau(v) - \omega_{\text{min}})}{\omega_{\text{min}}}.
\end{equation}
We call $(\omega_\Tau(v) - \omega_{\text{min}})/\omega_{\text{min}}$  the surplus weight of the vertex $v$. To obtain a non-trivial lower bound on the minimum WFL ratio $\lambda_{G, r}$, given by Theorem~\ref{thm:min}, we use the relationship between the weight of $r$-peripheral vertices and the vertices of the closer neighborhoods. For each neighborhood $N_j(r)$, $1 \leq j \leq e(r) - 1$, in which $2^{e(r) - j} > |N_j(r)|$, we get a non-zero lower bound on the surplus weight. We denote by $I_{\text{sur}}$ the set of such indices $j$, i.e.,  $I_{\text{sur}} = \{1 \leq j \leq e(r) - 1: 2^{e(r) - j} > |N_j(r)| \}$.
 
\begin{theorem} \label{thm:min}
The minimum WFL ratio of a target vertex $r$ on a graph $G$ is bounded by
\begin{equation}
    \label{eq:low}
\lambda_{G, r} \geq n(G) - 1 + \sum_{j \in I_{\text{sur}}} (2^{e(r) - j} - |N_j(r)|).
\end{equation}
\end{theorem}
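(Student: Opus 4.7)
The plan is to reduce the theorem to the per-shell estimate
\[
\omega_\Tau(N_j(r)) \geq 2^{e(r)-j}\,\omega_{\text{min}} \qquad (1 \leq j \leq e(r)),
\]
valid for every certificate $\Tau$. Combined with the trivial shell bound $\omega_\Tau(N_j(r)) \geq |N_j(r)|\,\omega_{\text{min}}$ obtained from $\omega_\Tau(v) \geq \omega_{\text{min}}$, this yields $\omega_\Tau(N_j(r)) \geq \max\{|N_j(r)|,\,2^{e(r)-j}\}\,\omega_{\text{min}}$. Summing over $j$ gives
\[
|\omega_\Tau| = \sum_{j=1}^{e(r)} \omega_\Tau(N_j(r)) \geq \omega_{\text{min}}\Bigl[\,n(G)-1 \;+\; \sum_{j \in I_{\text{sur}}} (2^{e(r)-j} - |N_j(r)|)\Bigr],
\]
where the $j=e(r)$ term contributes nothing since $|N_{e(r)}(r)| \geq 1 = 2^{0}$. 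Dividing by $\omega_{\text{min}}$ gives the bound on $\lambda_\Tau$, and minimising over $\Tau$ proves the theorem.

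To establish the per-shell estimate, I would fix any peripheral vertex $p \in N_{e(r)}(r)$. Since $\omega_\Tau(p) \geq \omega_{\text{min}} > 0$, at least one strategy $T \in \Tau$ contains $p$. For each such $T$, consider the unique $r$-to-$p$ path $r = v_0, v_1, \ldots, v_k = p$ in $T$. Consecutive vertices are adjacent in $G$, so the sequence of graph-distances $d_G(v_i, r)$ changes by at most one per step; starting at $0$ and ending at $e(r)$, it attains every integer in $[0, e(r)]$ by a discrete intermediate-value argument. Pick any path vertex $u_j^T$ at graph-distance $j$ from $r$; it lies in $N_j(r) \cap V(T)$ and is an ancestor of $p$ in $T$. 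Iterating the parent--child weight rule along the path from $u_j^T$ down to $p$ gives
\[
\omega_T(u_j^T) \;\geq\; 2^{d_T(u_j^T, p)}\,\omega_T(p) \;\geq\; 2^{d_G(u_j^T, p)}\,\omega_T(p) \;\geq\; 2^{e(r)-j}\,\omega_T(p),
\]
where the last inequality uses $d_G(u_j^T, p) \geq d_G(p, r) - d_G(u_j^T, r) = e(r) - j$. Summing over the strategies containing $p$ and noting $u_j^T \in N_j(r)$,
\[
\omega_\Tau(N_j(r)) \;\geq\; \sum_{T \ni p} \omega_T(u_j^T) \;\geq\; 2^{e(r)-j}\,\omega_\Tau(p) \;\geq\; 2^{e(r)-j}\,\omega_{\text{min}},
\]
which is the desired per-shell estimate.

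The main obstacle is that the tree-path from $r$ to $p$ in a strategy need not be a graph-geodesic; it may backtrack and so have length strictly greater than $e(r)$, which at first seems to dilute the required doubling factor $2^{e(r)-j}$. The resolution relies on two complementary facts that handle this slack: the discrete intermediate-value property still guarantees the path meets every shell $N_j(r)$, so an admissible ancestor $u_j^T$ exists; and the doubling inequality is applied through the tree-distance $d_T(u_j^T, p)$, which by $d_T \geq d_G$ together with the triangle inequality in $G$ is already at least $e(r) - j$, regardless of how much the tree path backtracks.
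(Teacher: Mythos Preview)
Your proof is correct and follows essentially the same approach as the paper's own proof: both establish the per-shell inequality $\sum_{v \in N_j(r)} \omega_\Tau(v) \geq 2^{e(r)-j}\omega_{\text{min}}$ by tracking the tree path from a fixed peripheral vertex back to $r$, then combine this with the trivial bound $\sum_{v \in N_j(r)} \omega_\Tau(v) \geq |N_j(r)|\,\omega_{\text{min}}$. The paper simply asserts that for each $T$ with $\omega_T(u)>0$ there exists $v_j \in N_j(r)$ with $\omega_T(v_j) \geq 2^{e(r)-j}\omega_T(u)$; you supply the justification the paper omits, namely the discrete intermediate-value argument guaranteeing the tree path meets every shell, and the chain $d_T(u_j^T,p) \geq d_G(u_j^T,p) \geq e(r)-j$ handling the possibility that the tree path is not a $G$-geodesic.
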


\begin{proof}
Since $\omega_\Tau(v) \geq \omega_{\text{min}} \ \forall v \in V(G) - \{ r \}$, we can bound $\lambda_{G, r}$ from Eq.~\eqref{eq:sur} by
\begin{equation}
    \label{eq:proof}
\lambda_{G, r} \geq n(G) - 1 + \frac{\sum_{j =  1}^{e(r) - 1} \sum_{v \in N_j(r)} (\omega_\Tau(v) - \omega_{\text{min}})}{\omega_{\text{min}}} = n(G) - 1 + \sum_{j =  1}^{e(r) - 1} \left( \frac{\sum_{v \in N_j(r)} \omega_\Tau(v)}{\omega_{\text{min}}} - |N_j(r)| \right).
\end{equation}
Let $u \in P(r)$ and $1 \leq j \leq e(r) - 1$. Note that for each subtree $T \in \Tau$ in which $\omega_T(u) > 0$, there is a vertex $v_j \in N_j(r)$ in which $\omega_T(v_j) \geq 2^{e(r) - j} \omega_T(u)$. Thus, $\sum_{v \in N_j(r)} \omega_\Tau(v) \geq 2^{e(r) - j} \omega_{\text{min}}$. On the other hand, $\sum_{v \in N_j(r)} \omega_\Tau(v) \geq |N_j(r)| \omega_{\text{min}}$. Combining both lower bounds on Eq.~\eqref{eq:proof}, the theorem follows.
\end{proof}

For each $j \in I_{\text{sur}}$, Theorem~\ref{thm:min} counts the minimum surplus weight in the $j^{\text{th}}$ neighborhood that originates from the path between the vertices of this neighborhood and the $r$-peripheral vertices in the subtrees. From Eq.~\eqref{eq:low}, we see that the minimum surplus weight on $N_j(r)$ grows exponentially with the distance between the vertices of $N_j(r)$ and the $r$-peripheral vertices, a dependency that matches the exponential nature of the weight function along paths. Furthermore, this growth can be linearly counterbalanced in the number of vertices of $N_j(r)$, an intuitive result, since more neighbors imply more paths to distribute and compensate for the weight imbalance. Specifically, if $|N_j(r)| \geq 2^{e(r) - j}$, surplus weight is not strictly required in the $j^{\text{th}}$ neighborhood. If it holds for every $j$, i.e., $I_{\text{sur}} =\emptyset$, surplus weight is not strictly required in the whole graph. This is the case of the Petersen graph, with the strategies shown in Figure~\ref{fig:petersen}.  

To achieve the minimum surplus weight of Theorem~\ref{thm:min} and thus get the tight bound of Eq.~\eqref{eq:low}, we have the following necessary condition. For any $j \in I_{\text{sur}}$, $v_j \in N_j(r)$, and $u \in P(r)$, there must be a shortest path from $r$ to $u$ passing by $v_j$. Equivalently, $d_{G - \{r\}}(v_j, u) = e(r) - j$ must hold for any $j, v_j, u$. To indeed build the tight bound certificate, these shortest paths must be incorporated into the subtrees so that for each vertex $v_j$, there must be at least one subtree in which $v_j$ is part of at least one shortest path from $r$ to all $u \in P(r)$. In addition, we need to ensure that no more surplus weights are added in the subtrees beyond those provided by Theorem~\ref{thm:min}. In particular, the weight of every vertex $v_j \in N_j(r)$, $j \notin I_{\text{sur}}$, need to be $\omega_{\text{min}}$, while for $j \in I_{\text{sur}}$, the sum of the weight in $N_j(r)$ must be $2^{e(r) - j} \omega_{\text{min}}$. An example where we can achieve the tight bound of Eq.~\eqref{eq:low} is the target vertex $z_0$ on Flower graph $J_3$, in which the certificate is shown in Figure~\ref{fig:Jm_z0}(a). Specifically, $J_3$ is a graph in which $I_{\text{sur}} =\{1\}$, a particular case that plays a determining role in the heuristic method of Section~\ref{sec:heu}. For a more general context, we have the $d$-dimensional cube graph $Q_d$, whose tight bound certificate can be found in the demonstration of Theorem $16$ of Hurlbert~\cite{WFL} paper.

There is an alternative way of expressing the lower bound of Theorem~\ref{thm:min}. Specifically, let $I_{\text{no}} =  \{1 \leq j \leq e(r): |N_j(r)| > 2^{e(r) - j} \}$, the set of indices $j$ in which we do not establish a non-zero minimum surplus weight for the $j^{\text{th}}$ neighborhood (excluding the equality case $2^{e(r) - j} = |N_j(r)|$ and including $P(r)$). Eq.~\eqref{eq:low} can be written in terms of $I_{\text{no}}$ as
\begin{equation}
    \label{eq:alt}
\lambda_{G, r} \geq 2^{e(r)} - 1 + \sum_{j \in I_{\text{no}}} (|N_j(r)| - 2^{e(r) - j}).
\end{equation}
Note that, combining Eqs.~\eqref{eq:low} and~\eqref{eq:alt}, we can conclude that our lower bound is stronger than the basic bound $\lambda_{G} \geq \max\{n(G), 2^{D(G)}\} - 1$. This alternative expression can be useful to simplify the calculations on graphs with targets in which $|I_{\text{sur}}| \gg |I_{\text{no}}|$, such as in graphs with a large diameter. In addition to its practical utility, Eq.~\eqref{eq:alt} also provides a complementary interpretation to Theorem~\ref{thm:min}. Specifically, $\lambda_{G, r} = 2^{e(r)} - 1$ is achievable when for every $1 \leq j \leq e(r)$, the inequality $ \sum_{v \in N_j(r)} \omega_{\mathcal{T}}(v) \geq 2^{e(r) - j} \omega_{\min} $ holds with equality. On the other hand, we also know that $\sum_{v \in N_j(r)} \omega_{\Tau}(v) \geq |N_j(r)| \omega_{\text{min}}$. Thus, if $|N(r)| \leq 2^{e(r) - j}$, in principle, there is room to distribute the total weight $2^{e(r) - j} \omega_{\text{min}}$ on the neighborhood vertices. Otherwise, i.e., if $|N(r)| > 2^{e(r) - j}$, the vertices of $N_j(r)$ force $\lambda_{G, r}$ to exceed $2^{e(r)} - 1$ proportional to the difference $|N_j(r)| - 2^{e(r) - j}$.

\section{The heuristic method} \label{sec:heu}

The technique used to prove the limitations of WFL on Theorem~\ref{thm:min} indicates that the bottleneck on the minimization of the WFL ratio $\lambda_\Tau$ is the $r$-peripheral vertices. These vertices, while potentially dictating the value of minimum weight $\omega_{\text{min}}$, force some vertices in the closer neighborhoods to have surplus weight. Consequently, to get close to the optimal certificate, we need to handle the assignment of weights to the $r$-peripheral vertices cautiously. This task involves striking a balance between two goals. The first is to assign as much weight as possible to the $r$-peripheral vertices, pushing the value of $\omega_{\text{min}}$ up. The second is to look for paths that reach the $r$-peripheral vertices while minimizing the surplus weight in the closer neighborhoods and, consequently, $|\omega_\Tau|$.

Using these insights, we developed a heuristic method to build suitable strategies for a useful application of WFL. Our method is based on the particular case $I_{\text{sur}} = \{1 \}$ of Theorem~\ref{thm:min}. i.e., based on the surplus weight of $N(r)$. Broadly speaking, we build one $r$-strategy for each neighbor $v$ of $r$. For each $v$, we start to build the subtree using shortest paths in the graph $G - \{r\}$ from $v$ to all the $r$-peripheral vertices, determining the minimum weight. Then, we complete our certificate $\Tau$ by adding more vertices to the subtrees to complete the minimum weight in all vertices. Formally, we define the following steps.
\begin{enumerate}
    \item Denote the neighbors of the target $r$ as $v_1, v_2, \ldots, v_{|N(r)|}$ and associate each $v_j$ with a $r$-strategy $T_j$ in such a way that the certificate is $\Tau = \{T_1, T_2, \ldots, T_{|N(r)|} \}$. For each $T_j \in \Tau$, $v_j$ is the only child of $r$.
    \item Set the minimum weight as 
    \begin{equation}
        \label{eq:min}
        \omega_{\text{min}} = \underset{u \in P(r)}{\min} \left\{ \sum_{j =  1}^{|N(r)|} 2^{e(r) - 1 - d_{G - \{r\}}(v_j, u)} \right\}.
    \end{equation}
    \item For all $j= 1, 2, \ldots, |N(r)|$, build the so-called trunk of the subtree $T_j$. The trunk of $T_j$ consists of adding shortest paths of the graph $G-\{r\}$ from $v_j \in N(r)$ to each $r$-peripheral vertex $u \in P(r)$. 
    At this step, for each given vertex $v$ that is added to the subtree $T_j$, assign the weight $\omega_{T_j}(v) = 2^{e(r) - 1 - d_{T_j}(v_j, v)}$.
    \item For each vertex $s$ whose summation of the weights on the $|N(r)|$ trunks built in step $3$ has not reached $\omega_{\text{min}}$, add $s$ to the subtrees until $\omega_{\Tau}(s) \geq \omega_{\text{min}}$. We call by branches the edges which we add to the subtrees in step $4$. If necessary, increase the weight of the trunk vertices.
\end{enumerate}

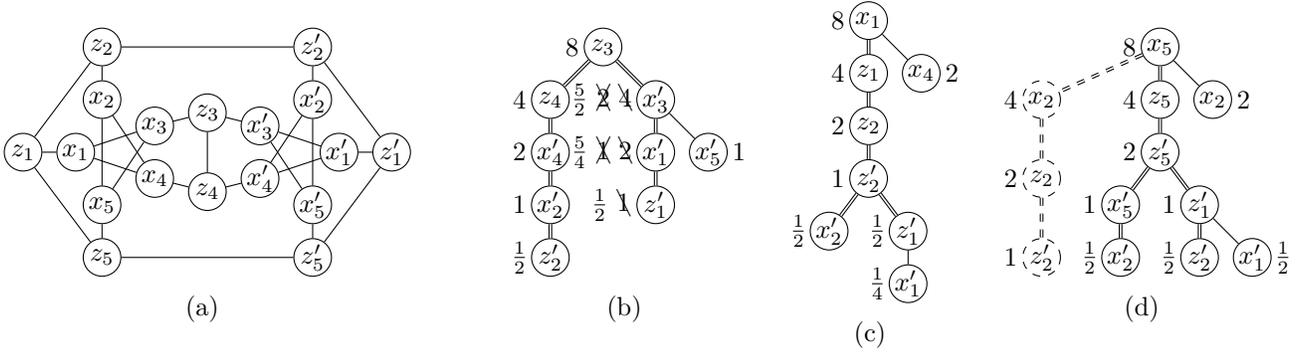
\begin{figure}[!t]
    \centering
    \hspace{-0.17\textwidth}
    \begin{minipage}{0.3\textwidth}
        \centering
        \subfigure[]{
            \begin{tikzpicture}[scale=0.35]
            \tikzstyle{every node}=[circle, draw, minimum size=0.5cm, inner sep=0pt]
            
            \node (A) at (0, 1.5) {\footnotesize $z_{3}$};
            \node (B) at (0, -1.5) {\footnotesize $z_{4}$};
            \node (C) at (-2, 1) {\footnotesize $x_{3}$};
            \node (D) at (-2, -1) {\footnotesize $x_{4}$};
            \node (E) at (2, 1) {\footnotesize $x_{3}'$};
            \node (F) at (2, -1) {\footnotesize $x_{4}'$};
            \node (G) at (-4, 2) {\footnotesize $x_{2}$};
            \node (H) at (-4, -2) {\footnotesize $x_{5}$};
            \node (I) at (-5, 0) {\footnotesize $x_{1}$};
            \node (J) at (4, 2) {\footnotesize $x_{2}'$};
            \node (K) at (4, -2) {\footnotesize $x_{5}'$};
            \node (L) at (5, 0) {\footnotesize $x_{1}'$};
            \node (M) at (-4, 4) {\footnotesize $z_{2}$};
            \node (N) at (-4, -4) {\footnotesize $z_{5}$};
            \node (O) at (-7, 0) {\footnotesize $z_{1}$};
            \node (P) at (4, 4) {\footnotesize $z_{2}'$};
            \node (Q) at (4, -4) {\footnotesize $z_{5}'$};
            \node (R) at (7, 0) {\footnotesize $z_{1}'$};

            \draw (A) -- (B);
            \draw (A) -- (C);
            \draw (B) -- (D);
            \draw (A) -- (E);
            \draw (B) -- (F);
            \draw (C) -- (H);
            \draw (C) -- (I);
            \draw (D) -- (G);
            \draw (D) -- (I);
            \draw (G) -- (H);
            \draw (E) -- (K);
            \draw (E) -- (L);
            \draw (F) -- (J);
            \draw (F) -- (L);
            \draw (J) -- (K);
            \draw (G) -- (M);
            \draw (H) -- (N);
            \draw (I) -- (O);
            \draw (J) -- (P);
            \draw (K) -- (Q);
            \draw (L) -- (R);
            \draw (M) -- (O);
            \draw (M) -- (P);
            \draw (N) -- (O);
            \draw (N) -- (Q);
            \draw (P) -- (R);
            \draw (Q) -- (R);
            \end{tikzpicture}
        }
    \end{minipage}
    \hspace{0.06\textwidth}
    \begin{minipage}{0.1\textwidth}
        \centering
        \subfigure[]{
            \begin{tikzpicture}[scale=0.35]

            \tikzstyle{every node}=[circle, draw, minimum size=0.5cm, inner sep=0pt]

            \node (B) at (0, -2) [label={[xshift=3pt]left:{\footnotesize $8$}}] {\footnotesize $z_3$};
            \node (C) at (-2, -4) [label={[xshift=3pt]left:{\footnotesize $4$}}] {\footnotesize $z_4$};
            \node (D) at (2, -4) [label=left:{\footnotesize $\frac{5}{2} \ \xcancel{2} \ \bcancel{4}$}]{\footnotesize $x_3'$};
            \node (E) at (-2, -6) [label={[xshift=3pt]left:{\footnotesize $2$}}] {\footnotesize $x_4'$};
            \node (F) at (2, -6) [label=left:{\footnotesize $\frac{5}{4} \ \xcancel{1} \ \bcancel{2}$}]{\footnotesize $x_1'$};
            \node (G) at (-2, -8) [label={[xshift=3pt]left:{\footnotesize $1$}}]{\footnotesize $x_2'$};
            \node (H) at (2, -8) [label=left:{\footnotesize $\frac{1}{2} \ \bcancel{1}$}] {\footnotesize $z_1'$};
            \node (I) at (-2, -10) [label={[xshift=3pt]left:{\footnotesize $\frac{1}{2}$}}] {\footnotesize $z_2'$};
            \node (J) at (4, -6) [label={[xshift=-3pt]right:{\footnotesize $1$}}] {\footnotesize $x_5'$};

            \draw[double] (B) -- (C);
            \draw[double] (B) -- (D);
            \draw[double] (C) -- (E);
            \draw[double] (D) -- (F);
            \draw[double] (E) -- (G);
            \draw[double] (F) -- (H);
            \draw[double] (G) -- (I);
            \draw (D) -- (J);
            
            \end{tikzpicture}
        }
    \end{minipage}
    \hspace{0.095\textwidth}
    \begin{minipage}{0.1\textwidth}
        \centering
        \subfigure[]{
            \begin{tikzpicture}[scale=0.35]

            \tikzstyle{every node}=[circle, draw, minimum size=0.5cm, inner sep=0pt]
            
            \node (B) at (0, -2) [label={[xshift=3pt]left:{\footnotesize $8$}}] {\footnotesize $x_1$};
            \node (C) at (0, -4) [label={[xshift=3pt]left:{\footnotesize $4$}}] {\footnotesize $z_1$};
            \node (D) at (0, -6) [label={[xshift=3pt]left:{\footnotesize $2$}}]{\footnotesize $z_2$};
            \node (E) at (0, -8) [label={[xshift=3pt]left:{\footnotesize $1$}}]{\footnotesize $z_2'$};
            \node (F) at (-1.5, -10) [label={[xshift=3pt]left:{\footnotesize $\frac{1}{2}$}}] {\footnotesize $x_2'$};
            \node (G) at (1.5, -10) [label={[xshift=3pt]left:{\footnotesize $\frac{1}{2}$}}] {\footnotesize $z_1'$};
            \node (H) at (2, -4) [label={[xshift=-3pt]right:{\footnotesize $2$}}] {\footnotesize $x_4$};
            \node (I) at (1.5, -12) [label={[xshift=3pt]left:{\footnotesize $\frac{1}{4}$}}] {\footnotesize $x_1'$};

            \draw[double] (B) -- (C);
            \draw[double] (C) -- (D);
            \draw[double] (D) -- (E);
            \draw[double] (E) -- (F);
            \draw[double] (E) -- (G);
            \draw (B) -- (H);
            \draw (G) -- (I);

            \end{tikzpicture}
        }
    \end{minipage}
    \hspace{0.045\textwidth}
    \begin{minipage}{0.1\textwidth}
        \centering
        \subfigure[]{
            \begin{tikzpicture}[scale=0.35]

             \tikzstyle{every node}=[circle, draw, minimum size=0.5cm, inner sep=0pt]

            \node (B) at (0, -2) [label={[xshift=3pt]left:{\footnotesize $8$}}] {\footnotesize $x_5$};
            \node (C) at (0, -4) [label={[xshift=3pt]left:{\footnotesize $4$}}] {\footnotesize $z_5$};
            \node (D) at (0, -6) [label={[xshift=3pt]left:{\footnotesize $2$}}]{\footnotesize $z_5'$};
            \node (E) at (-1.5, -8) [label={[xshift=3pt]left:{\footnotesize $1$}}] {\footnotesize $x_5'$};
            \node (F) at (1.5, -8) [label={[xshift=3pt]left:{\footnotesize $1$}}] {\footnotesize $z_1'$};
            \node (G) at (-1.5, -10) [label={[xshift=3pt]left:{\footnotesize $\frac{1}{2}$}}] {\footnotesize $x_2'$};
            \node (H) at (1.5, -10) [label={[xshift=3pt]left:{\footnotesize $\frac{1}{2}$}}] {\footnotesize $z_2'$};
            \node (I) at (2, -4) [label={[xshift=-3pt]right:{\footnotesize $2$}}] {\footnotesize $x_2$};
            \node (J) at (3.5, -10) [label={[xshift=-3pt]right:{\footnotesize $\frac{1}{2}$}}] {\footnotesize $x_1'$};
            \node (K) at (-4.5, -4) [draw, dashed, label={[xshift=3pt]left:{\footnotesize $4$}}] {\footnotesize $x_2$};
            \node (L) at (-4.5, -7) [draw, dashed, label={[xshift=3pt]left:{\footnotesize $2$}}] {\footnotesize $z_2$};
            \node (M) at (-4.5, -10) [draw, dashed, label={[xshift=3pt]left:{\footnotesize $1$}}] {\footnotesize $z_2'$};

            \draw[double] (B) -- (C);
            \draw[double] (C) -- (D);
            \draw[double] (D) -- (E);
            \draw[double] (D) -- (F);
            \draw[double] (E) -- (G);
            \draw[double] (F) -- (H);
            \draw (B) -- (I);
            \draw (F) -- (J);
            \draw[double, dashed] (B) -- (K);
            \draw[double, dashed] (K) -- (L);
            \draw[double, dashed] (L) -- (M);
            
            \end{tikzpicture}
        }
    \end{minipage}
    \caption{Application of the heuristic method to the target $x_3$ of (a) Blanuša~2 graph. We have $e(x_3) = 4$, $N(x_3) = \{z_3, x_1, x_5 \}$, and $P(x_3) = \{x_2', z_1', z_2'\}$. The neighbors $z_3$, $x_1$, and $x_5$ of $x_3$ lead to the strategies (b) $T_1$, (c) $T_2$, (d) $T_3$. Certificate $\Tau = \{T_1, T_2, T_3\}$ gives the WFL ratio $\lambda_\Tau = 29.25$. We omit the root $x_3$ in the representation of the subtrees. The edges of the trunk are represented by double lines, while the branches are represented by standard lines. The minimum weight, calculated by Eq.~\eqref{eq:min}, is $\omega_{\text{min}} = 2$ and its minimizer is $P_{\text{min}}(x_3) = \{x_2' \}$. Therefore, we have surplus weight in $z_1'$ and $z_2'$ to work with. For $z_2'$, the shortest path from $x_5$ to $z_2'$, indicated by dashed edges and vertices in subtree $T_3$, is replaced by the non-shortest path $x_5$, $z_5$, $z_5'$, $z_1'$, $z_2'$ in order to reduce the surplus weight. For $z_1'$, on the other hand, we take half of the weight of $z_1'$ in $T_1$, as well as of its two closest ancestors, eliminating the surplus weight on the vertex $x_3'$. The weights are updated with the original value crossed out with a single strikethrough. Observe that we have no surplus weight beyond the theoretical limit of the algorithm. However, our choice for the trunks of $\Tau$ is not the only one with this property. Specifically, in the subtree $T_1$, we could reconfigure the subgraph induced by the vertices $x_1'$ and $z_1'$ by placing $x_1'$ as a child of $x_4'$ and eliminating the vertex $x_3'$. For step $4$, the vertices that have not yet reached the minimum weight are $x_5'$, $x_4$, $x_2$, and $x_1'$. The vertices $x_5'$, $x_4$, and $x_2$ are solved directly by adding branches. For $x_1'$, which already has weight $1$ in the trunk of $T_1$, we can reach a total of $1.75$ by adding branches in $T_2$ and $T_3$. To reach the minimum of $2$ for the vertex $x_1'$, we update the weight of $x_1'$ to $5/4$ in $T_1$, forcing us to create surplus weight in its ancestor $x_3'$. In this case, the weights are updated with the original value crossed out with a double strikethrough.
    }
    \label{fig:B2_x3}
\end{figure}

The goal of steps $3$ and $4$ is to minimize the sum of surplus weights in each vertex of the graph by selecting combinations of shortest paths and branches, respectively. For step $3$, additional strategies can improve the process in some cases. Let $P_{\text{min}}(r)$ be the set of minimizers $\arg\min_{u \in P(r)} \omega_{\text{min}}(u)$ of Eq.~\eqref{eq:min}. Note that if $u \in P_{\text{min}}(r)$, $\omega_{\Tau}(u) = \omega_{\text{min}}$. Otherwise, $\omega_{\Tau}(u) > \omega_{\text{min}}$, and the surplus weight on $u$ gives us two possibilities that can be combined: either simply manually assign less weight to $u$ and, if possible, decrease the weight of ancestors of $u$ in the subtree, or replace the shortest path for one of greater length.

Our method can be partially assisted by a computer, automating tasks such as identifying $r$-peripheral vertices, computing their distances to neighbors, and listing minimum paths. However, in steps $3$ and $4$, we go through a decision-making process to choose shortest paths and branches, respectively. Although we have a clear mathematical criterion, which is to minimize surplus weight, we have not proposed an algorithmic process to deal with this task. In the case of step $3$, if $P(r) = P_{\text{min}}(r)$, we have no surplus weight in the $r$-peripheral vertices and we could solve it by using a brute-force approach to test all combinations of shortest paths. Naturally, the number of combinations is exponential in the worst case. On the other hand, if $P(r) \neq P_{\text{min}}(r)$, then we have different approaches to deal with surplus weight on the $r$-peripheral vertices, making the decision-making process more complicated. In step $4$, we add branches, which is an even less structured process. Thus, our heuristic is feasible for small graphs like Blanuša~2 and well-structured infinite families such as Flower graphs, where analyzing smaller cases helps identify structural patterns to construct the general certificate.

\paragraph{Applications on the snark graphs}

As an initial example, the application of our heuristic approach on the Petersen graph results in the $r$-strategies of Figure~\ref{fig:petersen} up to a multiplication factor of $2$. We have $e(r) = 2$ and each of the six vertices of $P(r)$ is at distance $1$ from one neighbor of $r$ and at distance $2$ from the other two neighbors. Therefore, by Eq.~\eqref{eq:min}, $\omega_{\text{min}} = 2$. In step $3$, all the shortest paths from the neighbors of $r$ to $r$-peripheral vertices are unique. Furthermore, since $P(r) = P_{\text{min}}$, we have no surplus weight on the $r$-peripheral vertices. Step $4$ is not required, since all vertices already reach the minimum weight in step $3$.

Applying the heuristic to the Blanuša graph $B_2$ and the Flower graphs $J_m \ (m = 2k + 1, m \geq 3$), we improved the best-known upper bounds on $\lambda_{G, r}$ from Adauto et al.~\cite{WFL_snarks} for most targets. Note that for $J_3$, Theorem~\ref{thm:min} gives $\lambda_{J_3} \geq 12$, so the dual approach is not able to improve the bound $\pi(J_3) \leq 13$ from Adauto et al.~\cite{WFL_snarks}. As a result, we obtained better bounds for the pebbling number of the graphs in Theorems~\ref{thm:Blan} ($B_2$) and~\ref{thm:Flow} ($J_m, \ m \geq 5$), where we detail the $r$-strategies for a representative target $r$ in each case.

\begin{figure}[!t]
    \centering
    \hspace{-0.125\textwidth}
    \begin{minipage}{0.2\textwidth}
        \centering
        \subfigure[]{
            \begin{tikzpicture}[scale=0.35]

            \tikzstyle{every node}=[circle, draw, minimum size=0.5cm, inner sep=0pt]

            \node (A) at (0, 0) [label={[xshift=3pt]left:{\footnotesize $4$}}] {\footnotesize $v_0$};
            \node (B) at (-2.75, -2) [label={[xshift=-3pt]right:{\footnotesize $2$}}] {\footnotesize $v_{-1}$};
            \node (C) at (2.75, -2) [label={[xshift=3pt]left:{\footnotesize $2$}}] {\footnotesize $v_1$};
            \node (D) at (-2.75, -4) [label={[xshift=-3pt]right:{\footnotesize $1$}}] {\footnotesize $z_{-1}$};
            \node (E) at (2.75, -4) [label={[xshift=3pt]left:{\footnotesize $1$}}] {\footnotesize $z_1$};
            \node (F) at (-4.25, -6) [label={[xshift=3pt]left:{\footnotesize $\frac{1}{2}$}}] {\footnotesize $x_{-1}$};
            \node (G) at (-1.25, -6) [label={[xshift=3pt]left:{\footnotesize $\frac{1}{2}$}}] {\footnotesize $y_{-1}$};
            \node (H) at (1.25, -6) [label={[xshift=-3pt]right:{\footnotesize $\frac{1}{2}$}}] {\footnotesize $x_1$};
            \node (I) at (4.25, -6) [label={[xshift=-3pt]right:{\footnotesize $\frac{1}{2}$}}] {\footnotesize $y_1$};
            
            \draw[double] (A) -- (B);
            \draw[double] (A) -- (C);
            \draw[double] (B) -- (D);
            \draw[double] (C) -- (E);
            \draw (D) -- (F);
            \draw (D) -- (G);
            \draw (E) -- (H);
            \draw (E) -- (I);
            
            \end{tikzpicture}
        }
    \end{minipage}
    \hspace{0.1\textwidth}
    \begin{minipage}{0.2\textwidth}
        \centering
        \subfigure[]{
            \begin{tikzpicture}[scale=0.35]

            \tikzstyle{every node}=[circle, draw, minimum size=0.5cm, inner sep=0pt]
            
            \node (A) at (0, 0) [label={[xshift=3pt]left:{\footnotesize $8$}}] {\footnotesize $v_0$};
            \node (B) at (-2.75, -2) [label={[xshift=-3pt]right:{\footnotesize $4$}}] {\footnotesize $v_{-1}$};
            \node (C) at (2.75, -2) [label={[xshift=3pt]left:{\footnotesize $4$}}] {\footnotesize $v_1$};
            \node (D) at (-2.75, -4) [label={[xshift=-3pt]right:{\footnotesize $2$}}] {\footnotesize $v_{-2}$};
            \node (E) at (2.75, -4) [label={[xshift=3pt]left:{\footnotesize $2$}}] {\footnotesize $v_2$};
            \node (F) at (-2.75, -6) [label={[xshift=-3pt]right:{\footnotesize $1$}}] {\footnotesize $z_{-2}$};
            \node (G) at (2.75, -6) [label={[xshift=3pt]left:{\footnotesize $1$}}] {\footnotesize $z_2$};
            \node (H) at (-4.75, -4) [label={[xshift=3pt]left:{\footnotesize $1$}}] {\footnotesize $z_{-1}$};
            \node (I) at (4.75, -4) [label={[xshift=-3pt]right:{\footnotesize $1$}}] {\footnotesize $z_1$};
            \node (J) at (-4.25, -8) [label={[xshift=3pt]left:{\footnotesize $\frac{1}{2}$}}] {\footnotesize $x_{-2}$};
            \node (K) at (-1.25, -8) [label={[xshift=3pt]left:{\footnotesize $\frac{1}{2}$}}] {\footnotesize $y_{-2}$};
            \node (L) at (1.25, -8) [label={[xshift=-3pt]right:{\footnotesize $\frac{1}{2}$}}] {\footnotesize $x_2$};
            \node (M) at (4.25, -8) [label={[xshift=-3pt]right:{\footnotesize $\frac{1}{2}$}}] {\footnotesize $y_2$};
            
            \draw[double] (A) -- (B);
            \draw[double] (A) -- (C);
            \draw[double] (B) -- (D);
            \draw[double] (C) -- (E);
            \draw[double] (D) -- (F);
            \draw[double] (E) -- (G);
            \draw (B) -- (H);
            \draw (C) -- (I);
            \draw (F) -- (J);
            \draw (F) -- (K);
            \draw (G) -- (L);
            \draw (G) -- (M);

            \end{tikzpicture}
        }
    \end{minipage}
    \hspace{0.125\textwidth}
    \begin{minipage}{0.2\textwidth}
        \centering
        \subfigure[]{
            \begin{tikzpicture}[scale=0.35]

            \tikzstyle{every node}=[circle, draw, minimum size=0.5cm, inner sep=0pt]

            \node (A) at (0, 0) [label=left:{\footnotesize $2^{k + 1}$}] {\footnotesize $v_0$};
            \node (B) at (-2.75, -2) [label={[xshift=-2pt]right:{\footnotesize $2^{k}$}}] {\footnotesize $v_{-1}$};
            \node (C) at (2.75, -2) [label={[xshift=2pt]left:{\footnotesize $2^{k}$}}] {\footnotesize $v_1$};
            \node (D) at (-2.75, -4) [label=right:{\footnotesize $2^{k - 1}$}] {\footnotesize $v_{-2}$};
            \node (E) at (2.75, -4) [label=left:{\footnotesize $2^{k - 1}$}] {\footnotesize $v_2$};
            \node (F) at (-2.75, -9) [label={[xshift=-3pt]right:{\footnotesize $2$}}] {\footnotesize $v_{-k}$};
            \node (G) at (2.75, -9) [label={[xshift=3pt]left:{\footnotesize $2$}}] {\footnotesize $v_k$};
            \node (H) at (-2.75, -11) [label={[xshift=-3pt]right:{\footnotesize $1$}}] {\footnotesize $z_{-k}$};
            \node (I) at (2.75, -11) [label={[xshift=3pt]left:{\footnotesize $1$}}] {\footnotesize $z_k$};
            \node (J) at (-4.75, -4) [label={[xshift=3pt]left:{\footnotesize $1$}}] {\footnotesize $z_{-1}$};
            \node (K) at (4.75, -4) [label={[xshift=-3pt]right:{\footnotesize $1$}}] {\footnotesize $z_1$};
            \node (L) at (-4.75, -6) [label={[xshift=3pt]left:{\footnotesize $1$}}] {\footnotesize $z_{-2}$};
            \node (M) at (4.75, -6) [label={[xshift=-3pt]right:{\footnotesize $1$}}] {\footnotesize $z_2$};
            \node (N) at (-4.75, -9) [label={[xshift=3pt]left:{\footnotesize $1$}}] {\scriptsize $z_{1 - k}$};
            \node (O) at (4.75, -9) [label={[xshift=-3pt]right:{\footnotesize $1$}}] {\scriptsize $z_{k - 1}$};
            \node (P) at (-4.25, -13) [label={[xshift=3pt]left:{\footnotesize $\frac{1}{2}$}}] {\footnotesize $x_{-k}$};
            \node (Q) at (-1.25, -13) [label={[xshift=3pt]left:{\footnotesize $\frac{1}{2}$}}] {\footnotesize $y_{-k}$};
            \node (R) at (1.25, -13) [label={[xshift=-3pt]right:{\footnotesize $\frac{1}{2}$}}] {\footnotesize $x_k$};
            \node (S) at (4.25, -13) [label={[xshift=-3pt]right:{\footnotesize $\frac{1}{2}$}}] {\footnotesize $y_k$};

            \fill (-2.75, -6.25) circle (2pt);
            \fill (2.75, -6.25) circle (2pt);
            \fill (-2.75, -6.5) circle (2pt);
            \fill (2.75, -6.5) circle (2pt);
            \fill (-2.75, -6.75) circle (2pt);
            \fill (2.75, -6.75) circle (2pt);
            
            \draw[double] (A) -- (B);
            \draw[double] (A) -- (C);
            \draw[double] (B) -- (D);
            \draw[double] (C) -- (E);
            \draw[double] (D) -- (-2.75, -5.75);
            \draw[double] (E) -- (2.75, -5.75);
            \draw[double] (-2.75, -7.25) -- (F);
            \draw[double] (2.75, -7.25) -- (G);
            \draw[double] (F) -- (H);
            \draw[double] (G) -- (I);
            \draw (B) -- (J);
            \draw (C) -- (K);
            \draw (D) -- (L);
            \draw (E) -- (M);
            \draw (-3, -7.25) -- (N);
            \draw (3, -7.25) -- (O);
            \draw (H) -- (P);
            \draw (H) -- (Q);
            \draw (I) -- (R);
            \draw (I) -- (S);
            
            \end{tikzpicture}
        }
    \end{minipage}
    \caption{Application of the heuristic method to the target $z_0$ of Flower graph $J_m$. We start by building the strategies of (a) $J_3$ and (b) $J_5$ to arrive at the pattern of the strategies of (c) $J_m$, with $m = 2k + 1, m > 5$. For each graph, we show only one strategy ($T_1$). The other two strategies are obtained by swapping the vertex labels $v$ with $x$ ($T_2$) and $y$ ($T_3$), respectively. Certificate $\Tau = \{T_1, T_2, T_3\}$ gives the WFL ratio $\lambda_\Tau = 2^{k + 2} 3/2 + 2k - 2$ for $m \geq 3$, particularly resulting in a WFL ratio of $12$ and $26$ for the graphs $J_3$ and $J_5$, respectively. Similarly to the case of Blanuša~2, we omit the root $z_0$ in the representation of the subtrees, and we represent the trunk edges and branches by double lines and standard lines, respectively. For any $m \geq 3$, $e(z_0) = k + 2$, $N(z_0) = \{v_0, x_0, y_0 \}$, $P(z_0) = \{z_k, z_{-k} \}$, and the minimum weight, calculated by Eq.~\eqref{eq:min}, is $\omega_{\text{min}} = 3$. Furthermore, $P_{\text{min}}(z_0) = P(z_0)$, and the shortest path in $G - \{ z_0 \}$ between each neighbor of $z_0$ and each $z_0$-peripheral vertex is unique. Therefore, each set of trunks is unique to its respective graph. For the branches added in step $4$, observe that there is always a missing unit in the weight of the vertices $v_k$, $v_{-k}$, $x_k$, $x_{-k}$, $y_k$ and $y_{-k}$. We solve these vertices by adding two branches at each of the vertices $z_k$ and $z_{-k}$, assigning weight $1/2$ to the new vertices. This is enough to finalize the certificate of the $J_3$ graph but observe that vertices $z_1$ and $z_{-1}$ have no weight after step $3$ in graph $J_5$. We solve this case by adding the branches $v_{-1}z_{-1}$ and $v_1z_1$, with weight $1$ on both $z_{-1}$ and $z_1$. In general, the vertices $z_{j}$, $z_{-j}$, for each $1 \leq j \leq k - 1$, have no weight in $J_m$ after step $3$ and to them we add the branches $v_{-j}z_{-j}$ and $v_jz_j$, with weight $1$ on both $z_{-j}$ and $z_j$, finally getting for the $z_0$-strategy shown in (c).
    }
    \label{fig:Jm_z0}
\end{figure}
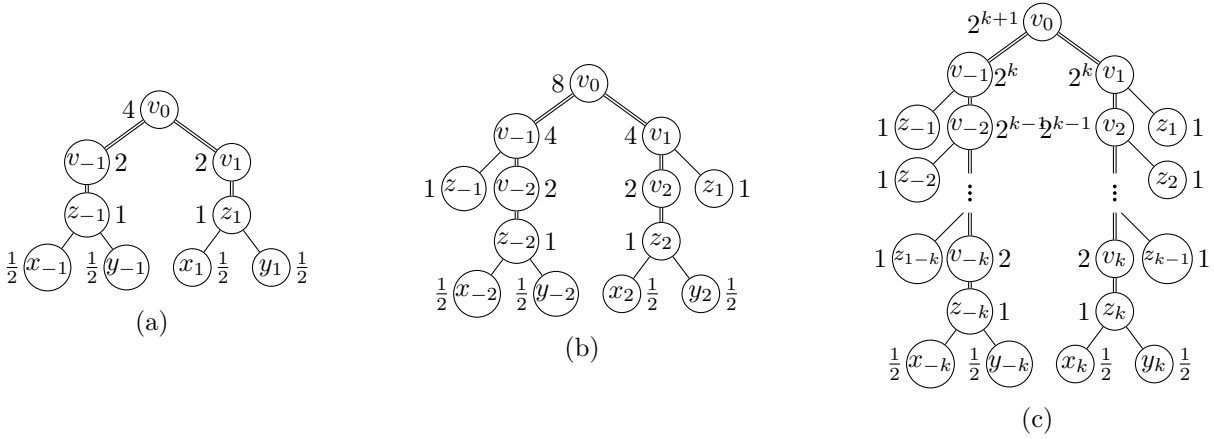

\begin{theorem} \label{thm:Blan}
The pebbling number of the Blanuša graph $B_2$ is bounded by $\pi(B_2) \leq 30$.
\end{theorem}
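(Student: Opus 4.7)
The plan is to apply the heuristic of Section~\ref{sec:heu} to every target $r \in V(B_2)$, producing a certificate $\Tau_r$ whose WFL ratio satisfies $\lambda_{\Tau_r} < 30$, so that $\pi(B_2,r) \leq \lfloor \lambda_{\Tau_r}\rfloor + 1 \leq 30$; since $\pi(B_2) = \max_r \pi(B_2,r)$, the theorem follows. Using the automorphism group of $B_2$, I would first reduce the $18$ targets to one representative per orbit, leaving only a handful of cases.

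The most demanding case, displayed in Figure~\ref{fig:B2_x3}, is $r = x_3$. Step~1 assigns one subtree $T_j$ to each neighbor in $N(x_3) = \{z_3, x_1, x_5\}$. Step~2 applies Eq.~\eqref{eq:min} to obtain $\omega_{\text{min}} = 2$, uniquely realized by $x_2' \in P(x_3)$. Step~3 builds the trunks from each $v_j$ along shortest paths in $G - \{x_3\}$ to $P(x_3) = \{x_2', z_1', z_2'\}$, and then applies two refinements to suppress non-essential surplus: (i) in $T_3$, the length-$3$ shortest path to $z_2'$ is swapped for the length-$4$ route $x_5, z_5, z_5', z_1', z_2'$, diluting the surplus on $z_2'$; and (ii) in $T_1$, the weights of $z_1'$ and of its two closest ancestors $x_1'$ and $x_3'$ are halved, eliminating surplus on $x_3'$. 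Step~4 adds branches for the vertices still below $\omega_{\text{min}}$: $x_5'$, $x_4$, and $x_2$ are raised to the threshold by single branches, while $x_1'$ would stall at $7/4$ and is rescued by promoting its weight in $T_1$ from $1$ to $5/4$, at the cost of an unavoidable surplus of $1/2$ at $x_3'$. Direct summation then yields $|\omega_{\Tau_{x_3}}| = 58.5$, hence $\lambda_{\Tau_{x_3}} = 29.25$ and $\pi(B_2, x_3) \leq 30$.

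For the remaining orbit representatives I would rerun the heuristic analogously. Each such target has $|N(r)| = 3$ (since $B_2$ is cubic) and eccentricity $e(r) \leq 4$, so the lower bound of Theorem~\ref{thm:min} already places the theoretical minimum close to the achievable ratio, and the heuristic is expected to output $\lambda_{\Tau_r} \leq 29.25$ in each case. The intuition that $x_3$ is the binding representative comes from Theorem~\ref{thm:min}: the forced surplus is controlled by the peripheral structure, and the layout of $P(x_3)$ in $B_2$ creates a worst-case combination of distances to $N(x_3)$.

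The main obstacle is verifying the feasibility of $\Tau_{x_3}$ after the ad hoc weight changes of Steps~3 and~4. Two conditions must be rechecked by inspection of the three subtrees: (a) the parent-child inequality $\omega_T(u) \geq 2\omega_T(v)$ in each $T_j$, which can be disturbed by the halving, the $5/4$-promotion, and the non-shortest substitution; and (b) the global threshold $\omega_{\Tau_{x_3}}(v) \geq \omega_{\text{min}} = 2$ for every non-root vertex $v$. Both checks are pointwise and routine, but they must be carried out carefully because lowering a vertex's contribution in one subtree can reopen a deficit elsewhere, while raising it can propagate surplus up to its ancestors, as the figure illustrates at $x_3'$. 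Once these checks close, taking the maximum over orbit representatives gives $\pi(B_2) \leq 30$.
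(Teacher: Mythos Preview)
Your approach is the same as the paper's, and your treatment of the $x_3$ case mirrors Figure~\ref{fig:B2_x3} accurately. The gap is in the other orbit representatives. The paper reduces to six classes $\{x_1,x_2,x_3,z_1,z_2,z_3\}$ and exhibits an explicit certificate for each (Table~\ref{table:B2} for the ratios, with the full strategies in the appendix); the maximum ratio $29.25$ is attained at \emph{both} $x_1$ and $x_3$, so $x_3$ is not uniquely ``the binding representative''. You, by contrast, only construct $\Tau_{x_3}$ and then write that the heuristic ``is expected to output $\lambda_{\Tau_r}\le 29.25$'' for the remaining targets. That expectation is not a proof, and the justification you offer for it is not valid: Theorem~\ref{thm:min} gives a \emph{lower} bound on $\lambda_{G,r}$, and for $B_2$ that lower bound is the same value $22$ for every target (since $|N(r)|=3$, $|N_2(r)|=6$, $|N_3(r)|\ge 4$ for all $r$). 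It therefore cannot rank the targets or certify that $x_3$ is worst; only actually running the heuristic and exhibiting the five remaining certificates closes the argument.

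A second, smaller point: even for $x_3$ you defer the two feasibility checks (the parent inequality $\omega_T(u)\ge 2\omega_T(v)$ after the halving/$5/4$ promotion, and the global threshold $\omega_{\Tau}(v)\ge 2$) to ``routine inspection''. These are indeed short, but since your modifications interact (halving $x_3'$ in $T_1$ and then re-inflating it via the $5/4$ at $x_1'$), the write-up should record the final weight vector and verify both conditions explicitly rather than describe them as obstacles still to be overcome.
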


\begin{proof}
There are six vertex classes, which can be represented, without loss of generality, as the targets $x_1, x_1, x_3, z_1, z_2, z_3$. We choose the target $x_3$ to detail. See Figure~\ref{fig:B2_x3}. The upper bounds on $\lambda_{G, r}$ for all targets are shown in Table~\ref{table:B2}. Observe that $\lambda_{B_2} \leq 29.25$. Therefore, $\pi(B_2) \leq 30$ and the theorem follows. 
\end{proof}

\begin{theorem} \label{thm:Flow}
For all $m \geq 5$, the pebbling number of the Flower graph $J_m$, with $m = 2k + 1$, is bounded by $\pi(J_m) \leq \lfloor 2^{k + 2} 8/5 + 2k - 8/5 \rfloor + 1$. In particular, $\pi(J_5) \leq 29$ and $\pi(J_7) \leq 56$. 
\end{theorem}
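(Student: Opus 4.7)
The plan is to apply the heuristic method of Section~\ref{sec:heu} to each of the three vertex orbits of $J_m$ and take the maximum WFL ratio. Under the automorphism group of $J_m$---rotation of indices modulo $m$ together with the $x \leftrightarrow y$ reflection---there are three orbits, $\{v_i\}$, $\{z_i\}$, and $\{x_i\} \cup \{y_i\}$, so it suffices to bound $\lambda_{J_m, r}$ for the representatives $v_0$, $z_0$, $x_0$. For $r = z_0$, Figure~\ref{fig:Jm_z0} already gives a certificate with $\lambda_{J_m, z_0} = 2^{k+2} \cdot 3/2 + 2k - 2$, strictly below the theorem's claimed bound for all $k \geq 2$. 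For $r = x_0$, the peripheral set is $P(x_0) = \{v_{\pm k}\}$, Eq.~\eqref{eq:min} gives $\omega_{\min} = 5/2$, and an analogous construction (with $T_1, T_2$ routed through the $x$-cycle and $T_3$ through the $v$-cycle from $z_0$) yields $\lambda_{J_m, x_0}$ also strictly below the claimed bound. The binding case is therefore $r = v_0$, which I would detail next.

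For $r = v_0$, $P(v_0) = \{x_{\pm k}, y_{\pm k}\}$ (all at distance $k+2$) and the neighbors are $v_1, v_{-1}, z_0$. In $G - \{v_0\}$ the neighbor $v_1$ lies on a shortest path only to $x_k, y_k$ (at distance $k+1$), while its distance to $x_{-k}, y_{-k}$ jumps to $k+2$ through the cycle chord $v_k v_{-k}$; $v_{-1}$ is symmetric, and $z_0$ lies on shortest paths to all four peripherals. Eq.~\eqref{eq:min} then yields $\omega_{\min} = 1 + 1/2 + 1 = 5/2$. I would take the trunks $T_1 : v_1 \to v_2 \to \cdots \to v_k \to z_k \to \{x_k, y_k\}$ (covering only its two accessible peripherals), $T_2$ mirroring $T_1$ on the $v_{-1}$ side, and $T_3 : z_0 \to \{x_0, y_0\} \to \cdots \to \{x_{\pm k}, y_{\pm k}\}$ branching along both cycles. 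A direct calculation gives $|\omega_{T_1}| = |\omega_{T_2}| = 2^{k+2}$ and $|\omega_{T_3}| = 2^{k+3} - 4$, so the trunks contribute $2^{k+4} - 4$.

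In step 4 the deficits are as follows: the four peripherals, $z_{\pm k}$, and $x_{\pm(k-1)}, y_{\pm(k-1)}$ each miss $1/2$, while the interior $z_{\pm j}$ for $1 \leq j \leq k-1$ are absent and miss $5/2$ each. I would fill each deficit with a single branch that saturates the parent-child constraint $\omega(u) \geq 2\omega(v)$: the peripherals via the cross edges $x_k y_{-k}$ and $y_k x_{-k}$ in $T_2, T_1$ respectively; $z_{\pm k}$ as children of the peripheral in $T_3$; $x_{\pm(k-1)}, y_{\pm(k-1)}$ as children of $x_{\pm k}, y_{\pm k}$ in $T_1, T_2$; and each interior $z_{\pm j}$ as a child of $v_{\pm j}$ in $T_1, T_2$, supported by the trunk weight $2^{k-j+2} \geq 5$. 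No surplus is introduced, so step 4 adds exactly the total deficit $2 + 1 + 2 + 5(k-1) = 5k$, giving $|\omega_\Tau| = 2^{k+4} - 4 + 5k$ and
\[
\lambda_{J_m, v_0} = \frac{2^{k+4} - 4 + 5k}{5/2} = 2^{k+2} \cdot \frac{8}{5} + 2k - \frac{8}{5}.
\]
The main obstacle is the careful verification that every branch above is compatible with the tree constraints---each chosen parent has trunk weight exactly twice the required child weight, so no extra weight propagates up. After this, the maximum over the three orbits is $2^{k+2} \cdot 8/5 + 2k - 8/5$, and $\pi(J_m) \leq \lfloor \lambda_{J_m} \rfloor + 1$ gives the claimed bound, specializing to $\pi(J_5) \leq 29$ and $\pi(J_7) \leq 56$ for $k = 2, 3$ respectively.
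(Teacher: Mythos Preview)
Your approach for the bottleneck target $v_0$ is essentially identical to the paper's: the same three strategies through $z_0$, $v_1$, $v_{-1}$, the same $\omega_{\min}=5/2$, and the same branch additions yielding $|\omega_\Tau| = 2^{k+4}+5k-4$. The only cosmetic difference is that the paper treats the cross-edge reaches from $x_k,y_k$ to $x_{-k},y_{-k}$ as part of the trunk (step~3, which mandates paths to \emph{all} peripherals), whereas you defer them to step~4; the resulting trees and weights coincide.

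Two points deserve attention. First, your list of peripheral-fixing branches names only two of the four needed cross-edge additions: to bring all of $x_{\pm k},y_{\pm k}$ up to $5/2$ you need $y_{-k},x_{-k}$ added in $T_1$ \emph{and} $y_k,x_k$ added in $T_2$, four branches in total (your deficit count of $2$ is correct, so the final arithmetic is unaffected). Second, and more substantively, the $x_0$ case is asserted but not constructed. The paper's certificate for $x_0$ (in the appendix) is noticeably more intricate than the $z_0$ or $v_0$ cases---its $T_2,T_3$ carry long chains of weight-$5/2$ and weight-$5$ vertices and use non-obvious rerouting---and your one-line sketch ``$T_1,T_2$ routed through the $x$-cycle and $T_3$ through the $v$-cycle'' does not by itself establish that $\lambda_{J_m,x_0}$ stays below the $v_0$ value for every $m\ge 5$. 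Filling this in is necessary to conclude that $v_0$ is indeed the maximum over the three orbits.
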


\begin{proof}
For any $m$, $J_m$ has rotational symmetry (with a necessary twist), as well as reflective symmetry that negates subscripts, yielding three vertex classes. In particular, we can choose, without loss of generality, $x_0$, $v_0$, and $z_0$ as our target vertices. We choose the target $z_0$ to detail. See Figure~\ref{fig:Jm_z0}. The upper bounds on $\lambda_{G, r}$ for all targets are shown in Table~\ref{table:Jm}. Note that the bottleneck of $J_m \ (m \geq 5)$ is the vertex $v_0$, in which remarkably our heuristic obtained the same result as Adauto et al.~\cite{WFL_snarks}. Thus, $\pi(J_m) \leq \lfloor 2^{k + 2} 8/5 + 2k - 8/5 \rfloor + 1$, as claimed.
\end{proof}

The upper bounds of Adauto et al.~\cite{WFL_snarks} on the pebbling numbers of Blanuša graph $B_2$ and Flower graphs $J_m \ (m = 2k + 1, m \geq 3)$ are, respectively, $34$ and $\lfloor 2^{k + 2} 9/5 + 2k - 18/5 \rfloor + 1$. Therefore, Theorem~\ref{thm:Blan} improves the first in $4$ units, while Theorem~\ref{thm:Flow} improves the last one by a constant factor of $9/8$. Beyond the graph as a whole, Tables~\ref{table:B2} (Blanuša graph) and~\ref{table:Jm} (Flower graphs) make the comparison between our bounds and the bounds of Adauto et al.~\cite{WFL_snarks} with a sharper eye for target vertices, considering the metric of the WFL ratio.  

\begin{table}[ht]
\begin{adjustbox}{width=1\textwidth,center=\textwidth}
    \footnotesize
    \centering
\begin{tabular}{|c|c|c|c|c|c|c|c|}
\hline
\multirow{2}{*}{} & \multicolumn{7}{c|}{\textbf{WFL ratio upper bounds for Blanuša $B_2$}} \\
\cline{2-8}
 & $x_1$ & $x_2$ & $x_3$ & $z_1$ & $z_2$ & $z_3$ & Graph \\
\hline
\textbf{Our heuristic} & $29.25$ & $26.6$ & $29.25$ & $26.6$ & $29$ & $27.2$ & $29.25$ \\
\hline
\textbf{Adauto et al.~\cite{WFL_snarks}} & $30$ & $31$ & $236/7 \approx 33.714$ & $26.6$ & $30$ & $32$ & $236/7 \approx 33.714$ \\
\hline
\end{tabular}
\end{adjustbox}
    \caption{Comparison between our upper bounds for the minimum WFL ratio of the Blanuša graph $B_2$ and those previously established by Adauto et al.~\cite{WFL_snarks}. We have improved the upper bounds for almost all roots except $z_1$. We highlight targets $x_2$, $x_3$, and $z_3$, in which we improve the upper bound on more than $4$ units. Observe that Theorem~\ref{thm:min} gives $\lambda_{B_2, r} \geq 22$ for all targets. Therefore, the gap between the lower and the upper bound for $\lambda_{B_2, r}$ lies between $4.6$ and $7.25$, depending on the target $r$.}
    \label{table:B2}
\end{table}

\begin{table}[ht]
    \centering
    \begin{adjustbox}{width=1\textwidth,center=\textwidth}
    \footnotesize
\begin{tabular}{|c|c|c|c|c|c|c|c|c|c|c|c|c|}
    \hline
    \multirow{3}{*}{} & \multicolumn{8}{c|}{\textbf{WFL ratio upper bounds for Flower snarks}} \\ \cline{2-9}
    & \multicolumn{4}{c|}{\textbf{Our heuristic}} & \multicolumn{4}{c|}{\textbf{Adauto et al.~\cite{WFL_snarks}}} \\ \cline{2-9}
    & $J_3$ & $J_5$ & $J_7$ & $J_m \ (m = 2k + 1, m > 7)$ & $J_3$ & $J_5$ & $J_7$ & $J_m \ (m = 2k + 1, m > 7)$ \\ \hline
    \textbf{$x_0$} & $12.8$ & $27.8$ & $54.6$ & $2^{k + 2}  13/10 + 6k - 5$ & $12.8$ & $29.2$ & $56.8$ & $2^{k + 2}  17/10 + 2k - 18/5$ \\ \hline
    \textbf{$v_0$} & $12.8$ & $28$ & $55.6$ & $2^{k + 2}  8/5 + 2k - 8/5$ & $12.8$ & $29.2$ & $55.6$ & $2^{k + 2}  8/5 + 2k - 8/5$ \\ \hline
     \textbf{$z_0$} & $12$ & $26$ & $52$ & $2^{k + 2} 3/2 + 2k - 2$ & $12.8$ & $29.2$ & $60$ & $2^{k + 2}  9/5 + 2k - 18/5$ \\ \hline
    \textbf{Graph} & $12.8$ & $28$ & $55.6$ & $2^{k + 2} 8/5 + 2k - 8/5$ & $12.8$ & $29.2$ & $60$ & $2^{k + 2} 9/5 + 2k - 18/5$ \\ \hline
\end{tabular}
\end{adjustbox}
    \caption{Comparison between our upper bounds for the minimum WFL ratio of the Flower graphs $J_m \ (m = 2k + 1, m \geq 3)$ and those previously established by Adauto et al.~\cite{WFL_snarks}. For $J_3$, we improve the minimum WFL ratio bound of the target $z_0$ from $12.8$ to $12$, which we conclude by Theorem~\ref{thm:min} that is optimal. For $m > 7$, we have improved the $J_m$ upper bounds for the targets $x_0$ and $z_0$ asymptotically by the constant factors of $17/13$ and $6/5$, respectively. For $v_0$, we obtained the same result as Adauto et al.~\cite{WFL_snarks}, which in particular is the bottleneck for $\lambda_{J_m}$ in our results. On the other hand, we can conclude using Eq.~\eqref{eq:alt} that Theorem~\ref{thm:min} gives $\lambda_{J_m, r} \geq 2^{k + 2} + 10 \ (m \geq 7)$ 
    for all targets, which implies the gap from our heuristic to our theoretical lower bound lies asymptotically in a constant factor of $13/10$ and $8/5$, depending on the target.}
    \label{table:Jm}
\end{table}

\paragraph{Final remarks}

Using our heuristic method, we have narrowed the known window for the pebbling number of the Blanuša graph $B_2$ and the Flower graph infinite family $J_m \ (m = 2k + 1, m \geq 5)$. Our method is grounded in a theoretical lower bound analysis, showing how a deeper understanding of the limitations of the Weight Function Lemma leads to a more structured and effective strategy selection---enhancing both the theoretical insight and the practical applicability of our work. Nonetheless, for the studied graphs, the gap between the lower bounds obtained by Theorem~\ref{thm:min} and the certificates obtained by the heuristic method---especially for the Flower snarks---is still considerable, indicating potential for improvement in both directions.

Regarding Theorem~\ref{thm:min}, the tight bound requires that, for each tuple $(j, v_j, u)$, where $j \in I_{\text{sur}}$, $v_j$ is a vertex in the $j^{\text{th}}$ neighborhood, and $u$ is a $r$-peripheral vertex, there exists a shortest path from $r$ to $u$ that passes through $v_j$. However, the greater the actual distance between the vertices of the $j^{\text{th}}$ neighborhood and the $r$-peripheral vertices, the harder it becomes to counterbalance the surplus weight caused by the $r$-peripheral vertices. Future refinements could take into account, for each neighborhood, a function of the distances between its vertices and the $r$-peripheral vertices. For the particular case of the first neighborhood, this function could correspond to Eq.~\eqref{eq:min}. 

On the heuristic side, our approach was designed based on the particular case of $I_{\text{sur}} = \{ 1\} $ of Theorem~\ref{thm:min}, i.e., by building the subtrees from paths connecting the vertices of the first neighborhood and the $r$-peripheral vertices. Future work may explore paths originating from other neighborhoods $N_j(r)$, with the choice of $j$ guided by a criterion that considers the potential surplus weight introduced by the scheme based on each $N_j(r)$. In such an approach, one must account not only for the paths from the chosen neighborhood to the $r$-peripheral vertices but also for the paths from those vertices back to the root of each subtree. Alternatively, more sophisticated methods could aim to combine paths from multiple neighborhoods simultaneously, seeking a more globally balanced surplus weight distribution. In this context, it becomes increasingly important to automate through some algorithmic procedure the steps $3$ and $4$ of the heuristic method, since similar routines are likely to be required in the development of more refined heuristics.

\bibliographystyle{plain}

\newpage

\appendix 

\section{Flower graphs representations}

In Figure~\ref{fig:Flowers}, we present a visual representation for the Flower graphs $J_3$ and $J_5$.

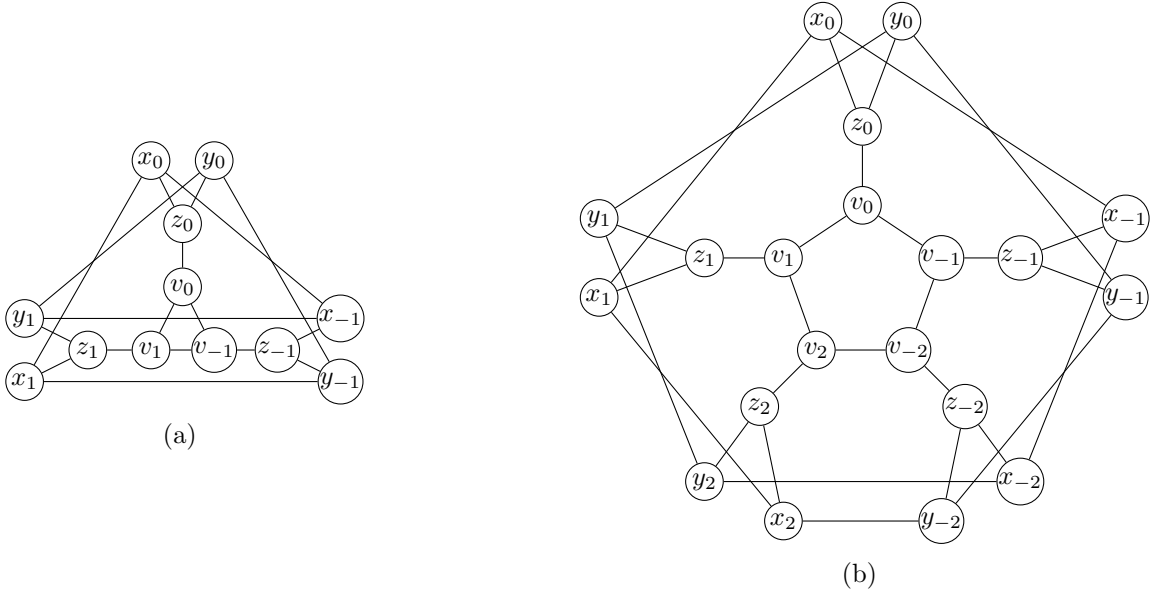
\begin{figure}[ht]
    \centering
    \hspace{-0.125\textwidth}
    \begin{minipage}{0.4\textwidth}
        \centering
        \subfigure[]{
            \begin{tikzpicture}[scale=0.35]

            \tikzstyle{every node}=[circle, draw, minimum size=0.5cm, inner sep=0pt]
            
            \node (A) at (0, 12/5) {\footnotesize $z_0$};
            \node (B) at (0, 0) {\footnotesize $v_0$};
            \node (C) at (-6/5, 24/5) {\footnotesize $x_0$};
            \node (D) at (6/5, 24/5) {\footnotesize $y_0$};
            \node (E) at (-6/5, -12/5) {\footnotesize $v_1$};
            \node (F) at (6/5, -12/5) {\footnotesize $v_{-1}$};
            \node (G) at (-6, -18/5) {\footnotesize $x_1$};
            \node (H) at (6, -6/5) {\footnotesize $x_{-1}$};
            \node (I) at (-6, -6/5) {\footnotesize $y_1$};
            \node (J) at (6, -18/5) {\footnotesize $y_{-1}$};
            \node (K) at (-18/5, -12/5) {\footnotesize $z_1$};
            \node (L) at (18/5, -12/5) {\footnotesize $z_{-1}$};

            \draw (A) -- (B);
            \draw (A) -- (C);
            \draw (A) -- (D);
            \draw (B) -- (E);
            \draw (B) -- (F);
            \draw (C) -- (G);
            \draw (C) -- (H);
            \draw (D) -- (I);
            \draw (D) -- (J);
            \draw (E) -- (K);
            \draw (F) -- (L);
            \draw (E) -- (F);
            \draw (G) -- (K);
            \draw (I) -- (K);
            \draw (H) -- (L);
            \draw (J) -- (L);
            \draw (H) -- (I);
            \draw (G) -- (J);
            
            \end{tikzpicture}
        }
    \end{minipage}
    \hspace{0.075\textwidth}
    \begin{minipage}{0.4\textwidth}
        \centering
        \subfigure[]{
            \begin{tikzpicture}[scale=0.35]

            \tikzstyle{every node}=[circle, draw, minimum size=0.5cm, inner sep=0pt]
            \node (A) at (0, 3) {\footnotesize $v_0$};
            \node (B) at (-3, 1) {\footnotesize $v_{1}$};
            \node (C) at (3, 1) {\footnotesize $v_{-1}$};
            \node (D) at (-1.75, -2.5) {\footnotesize $v_{2}$};
            \node (E) at (1.75, -2.5) {\footnotesize $v_{-2}$};
            \node (F) at (0, 6) {\footnotesize $z_0$};
            \node (G) at (-6, 1) {\footnotesize $z_{1}$};
            \node (H) at (6, 1) {\footnotesize $z_{-1}$};
            \node (I) at (-3.9, -4.65) {\footnotesize $z_{2}$};
            \node (J) at (3.9, -4.65) {\footnotesize $z_{-2}$};
            \node (K) at (-1.5, 10) {\footnotesize $x_0$};
            \node (L) at (1.5, 10) {\footnotesize $y_0$};
            \node (M) at (-10, -0.5) {\footnotesize $x_{1}$};
            \node (N) at (-10, 2.5) {\footnotesize $y_{1}$};
            \node (O) at (10, 2.5) {\footnotesize $x_{-1}$};
            \node (P) at (10, -0.5) {\footnotesize $y_{-1}$};
            \node (Q) at (-3, -9) {\footnotesize $x_{2}$};
            \node (R) at (-6, -7.5) {\footnotesize $y_{2}$};
            \node (S) at (6, -7.5) {\footnotesize $x_{-2}$};
            \node (T) at (3, -9) {\footnotesize $y_{-2}$};

            \draw (A) -- (B);
            \draw (A) -- (C);
            \draw (B) -- (D);
            \draw (C) -- (E);
            \draw (D) -- (E);
            \draw (A) -- (F);
            \draw (B) -- (G);
            \draw (C) -- (H);
            \draw (D) -- (I);
            \draw (E) -- (J);
            \draw (F) -- (K);
            \draw (F) -- (L);
            \draw (G) -- (M);
            \draw (G) -- (N);
            \draw (H) -- (O);
            \draw (H) -- (P);
            \draw (I) -- (Q);
            \draw (I) -- (R);
            \draw (J) -- (S);
            \draw (J) -- (T);
            \draw (K) -- (M);
            \draw (K) -- (O);
            \draw (L) -- (N);
            \draw (L) -- (P);
            \draw (M) -- (Q);
            \draw (N) -- (R);
            \draw (O) -- (S);
            \draw (P) -- (T);
            \draw (Q) -- (T);
            \draw (R) -- (S);

            \end{tikzpicture}
        }
    \end{minipage}
    \caption{Flower graphs (a) $J_3$ and (b) $J_5$.}
    \label{fig:Flowers}
\end{figure}

\section{Complete set of $z_0$-strategies on Flower graphs}

In Figures~\ref{fig:J3_complete},~\ref{fig:J5_complete}, and~\ref{fig:Jm_complete}, we present visual representations for all $z_0$-strategies $\Tau = \{ T_1, T_2, T_3 \}$ of the Flower graphs $J_3$, $J_5$, and $J_m \ (m = 2k + 1, m > 5)$, respectively. These figures complement Figure~\ref{fig:Jm_z0}, where only $T_1$ is shown explicitly.

\begin{figure}[ht]
    \centering
    \hspace{-0.075\textwidth}
    \begin{minipage}{0.2\textwidth}
        \centering
        \subfigure[]{
            \begin{tikzpicture}[scale=0.35]

            \tikzstyle{every node}=[circle, draw, minimum size=0.5cm, inner sep=0pt]

            \node (A) at (0, 0) [label={[xshift=3pt]left:{\footnotesize $4$}}] {\footnotesize $v_0$};
            \node (B) at (-2.75, -2) [label={[xshift=-3pt]right:{\footnotesize $2$}}] {\footnotesize $v_{-1}$};
            \node (C) at (2.75, -2) [label={[xshift=3pt]left:{\footnotesize $2$}}] {\footnotesize $v_1$};
            \node (D) at (-2.75, -4) [label={[xshift=-3pt]right:{\footnotesize $1$}}] {\footnotesize $z_{-1}$};
            \node (E) at (2.75, -4) [label={[xshift=3pt]left:{\footnotesize $1$}}] {\footnotesize $z_1$};
            \node (F) at (-4.25, -6) [label={[xshift=3pt]left:{\footnotesize $\frac{1}{2}$}}] {\footnotesize $x_{-1}$};
            \node (G) at (-1.25, -6) [label={[xshift=3pt]left:{\footnotesize $\frac{1}{2}$}}] {\footnotesize $y_{-1}$};
            \node (H) at (1.25, -6) [label={[xshift=-3pt]right:{\footnotesize $\frac{1}{2}$}}] {\footnotesize $x_1$};
            \node (I) at (4.25, -6) [label={[xshift=-3pt]right:{\footnotesize $\frac{1}{2}$}}] {\footnotesize $y_1$};
            
            \draw[double] (A) -- (B);
            \draw[double] (A) -- (C);
            \draw[double] (B) -- (D);
            \draw[double] (C) -- (E);
            \draw (D) -- (F);
            \draw (D) -- (G);
            \draw (E) -- (H);
            \draw (E) -- (I);
            
            \end{tikzpicture}
        }
    \end{minipage}
    \hspace{0.125\textwidth}
    \begin{minipage}{0.2\textwidth}
        \centering
        \subfigure[]{
            \begin{tikzpicture}[scale=0.35]

            \tikzstyle{every node}=[circle, draw, minimum size=0.5cm, inner sep=0pt]

            \node (A) at (0, 0) [label={[xshift=3pt]left:{\footnotesize $4$}}] {\footnotesize $x_0$};
            \node (B) at (-2.75, -2) [label={[xshift=-3pt]right:{\footnotesize $2$}}] {\footnotesize $x_{-1}$};
            \node (C) at (2.75, -2) [label={[xshift=3pt]left:{\footnotesize $2$}}] {\footnotesize $x_1$};
            \node (D) at (-2.75, -4) [label={[xshift=-3pt]right:{\footnotesize $1$}}] {\footnotesize $z_{-1}$};
            \node (E) at (2.75, -4) [label={[xshift=3pt]left:{\footnotesize $1$}}] {\footnotesize $z_1$};
            \node (F) at (-4.25, -6) [label={[xshift=3pt]left:{\footnotesize $\frac{1}{2}$}}] {\footnotesize $v_{-1}$};
            \node (G) at (-1.25, -6) [label={[xshift=3pt]left:{\footnotesize $\frac{1}{2}$}}] {\footnotesize $y_{-1}$};
            \node (H) at (1.25, -6) [label={[xshift=-3pt]right:{\footnotesize $\frac{1}{2}$}}] {\footnotesize $v_1$};
            \node (I) at (4.25, -6) [label={[xshift=-3pt]right:{\footnotesize $\frac{1}{2}$}}] {\footnotesize $y_1$};
            
            \draw[double] (A) -- (B);
            \draw[double] (A) -- (C);
            \draw[double] (B) -- (D);
            \draw[double] (C) -- (E);
            \draw (D) -- (F);
            \draw (D) -- (G);
            \draw (E) -- (H);
            \draw (E) -- (I);

            \end{tikzpicture}
        }
    \end{minipage}
    \hspace{0.125\textwidth}
    \begin{minipage}{0.2\textwidth}
        \centering
        \subfigure[]{
            \begin{tikzpicture}[scale=0.35]

            \tikzstyle{every node}=[circle, draw, minimum size=0.5cm, inner sep=0pt]

            \node (A) at (0, 0) [label={[xshift=3pt]left:{\footnotesize $4$}}] {\footnotesize $y_0$};
            \node (B) at (-2.75, -2) [label={[xshift=-3pt]right:{\footnotesize $2$}}] {\footnotesize $y_{-1}$};
            \node (C) at (2.75, -2) [label={[xshift=3pt]left:{\footnotesize $2$}}] {\footnotesize $y_1$};
            \node (D) at (-2.75, -4) [label={[xshift=-3pt]right:{\footnotesize $1$}}] {\footnotesize $z_{-1}$};
            \node (E) at (2.75, -4) [label={[xshift=3pt]left:{\footnotesize $1$}}] {\footnotesize $z_1$};
            \node (F) at (-4.25, -6) [label={[xshift=3pt]left:{\footnotesize $\frac{1}{2}$}}] {\footnotesize $x_{-1}$};
            \node (G) at (-1.25, -6) [label={[xshift=3pt]left:{\footnotesize $\frac{1}{2}$}}] {\footnotesize $v_{-1}$};
            \node (H) at (1.25, -6) [label={[xshift=-3pt]right:{\footnotesize $\frac{1}{2}$}}] {\footnotesize $x_1$};
            \node (I) at (4.25, -6) [label={[xshift=-3pt]right:{\footnotesize $\frac{1}{2}$}}] {\footnotesize $v_1$};
            
            \draw[double] (A) -- (B);
            \draw[double] (A) -- (C);
            \draw[double] (B) -- (D);
            \draw[double] (C) -- (E);
            \draw (D) -- (F);
            \draw (D) -- (G);
            \draw (E) -- (H);
            \draw (E) -- (I);
            
            \end{tikzpicture}
        }
    \end{minipage}
    \caption{Set $\Tau = \{T_1, T_2, T_3 \}$ of $z_0$-strategies for the Flower graph $J_3$, where (a) shows $T_1$, (b) $T_2$, and (c) $T_3$. We omit the root $z_0$ in the representation of the subtrees, and we represent the trunk edges and branches by double lines and standard lines, respectively.}
    \label{fig:J3_complete}
\end{figure}
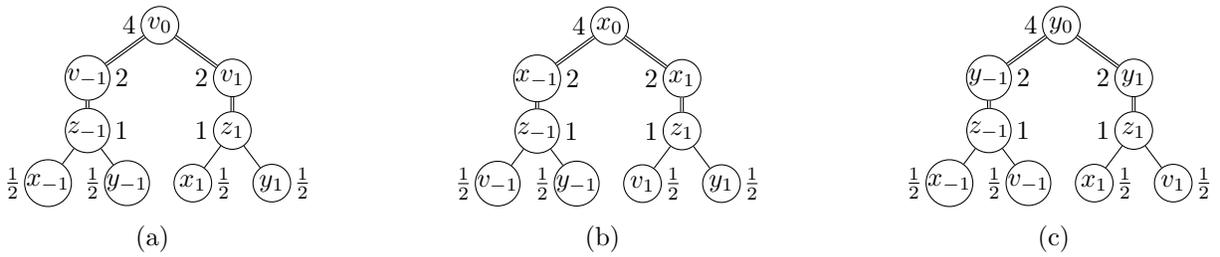

\begin{figure}[ht]
    \centering
    \hspace{-0.1\textwidth}
    \begin{minipage}{0.2\textwidth}
        \centering
        \subfigure[]{
            \begin{tikzpicture}[scale=0.35]

            \tikzstyle{every node}=[circle, draw, minimum size=0.5cm, inner sep=0pt]
            
            \node (A) at (0, 0) [label={[xshift=3pt]left:{\footnotesize $8$}}] {\footnotesize $v_0$};
            \node (B) at (-2.75, -2) [label={[xshift=-3pt]right:{\footnotesize $4$}}] {\footnotesize $v_{-1}$};
            \node (C) at (2.75, -2) [label={[xshift=3pt]left:{\footnotesize $4$}}] {\footnotesize $v_1$};
            \node (D) at (-2.75, -4) [label={[xshift=-3pt]right:{\footnotesize $2$}}] {\footnotesize $v_{-2}$};
            \node (E) at (2.75, -4) [label={[xshift=3pt]left:{\footnotesize $2$}}] {\footnotesize $v_2$};
            \node (F) at (-2.75, -6) [label={[xshift=-3pt]right:{\footnotesize $1$}}] {\footnotesize $z_{-2}$};
            \node (G) at (2.75, -6) [label={[xshift=3pt]left:{\footnotesize $1$}}] {\footnotesize $z_2$};
            \node (H) at (-4.75, -4) [label={[xshift=3pt]left:{\footnotesize $1$}}] {\footnotesize $z_{-1}$};
            \node (I) at (4.75, -4) [label={[xshift=-3pt]right:{\footnotesize $1$}}] {\footnotesize $z_1$};
            \node (J) at (-4.25, -8) [label={[xshift=3pt]left:{\footnotesize $\frac{1}{2}$}}] {\footnotesize $x_{-2}$};
            \node (K) at (-1.25, -8) [label={[xshift=3pt]left:{\footnotesize $\frac{1}{2}$}}] {\footnotesize $y_{-2}$};
            \node (L) at (1.25, -8) [label={[xshift=-3pt]right:{\footnotesize $\frac{1}{2}$}}] {\footnotesize $x_2$};
            \node (M) at (4.25, -8) [label={[xshift=-3pt]right:{\footnotesize $\frac{1}{2}$}}] {\footnotesize $y_2$};
            
            \draw[double] (A) -- (B);
            \draw[double] (A) -- (C);
            \draw[double] (B) -- (D);
            \draw[double] (C) -- (E);
            \draw[double] (D) -- (F);
            \draw[double] (E) -- (G);
            \draw (B) -- (H);
            \draw (C) -- (I);
            \draw (F) -- (J);
            \draw (F) -- (K);
            \draw (G) -- (L);
            \draw (G) -- (M);
            
            \end{tikzpicture}
        }
    \end{minipage}
    \hspace{0.125\textwidth}
    \begin{minipage}{0.2\textwidth}
        \centering
        \subfigure[]{
            \begin{tikzpicture}[scale=0.35]

            \tikzstyle{every node}=[circle, draw, minimum size=0.5cm, inner sep=0pt]
            
            \node (A) at (0, 0) [label={[xshift=3pt]left:{\footnotesize $8$}}] {\footnotesize $x_0$};
            \node (B) at (-2.75, -2) [label={[xshift=-3pt]right:{\footnotesize $4$}}] {\footnotesize $x_{-1}$};
            \node (C) at (2.75, -2) [label={[xshift=3pt]left:{\footnotesize $4$}}] {\footnotesize $x_1$};
            \node (D) at (-2.75, -4) [label={[xshift=-3pt]right:{\footnotesize $2$}}] {\footnotesize $x_{-2}$};
            \node (E) at (2.75, -4) [label={[xshift=3pt]left:{\footnotesize $2$}}] {\footnotesize $x_2$};
            \node (F) at (-2.75, -6) [label={[xshift=-3pt]right:{\footnotesize $1$}}] {\footnotesize $z_{-2}$};
            \node (G) at (2.75, -6) [label={[xshift=3pt]left:{\footnotesize $1$}}] {\footnotesize $z_2$};
            \node (H) at (-4.75, -4) [label={[xshift=3pt]left:{\footnotesize $1$}}] {\footnotesize $z_{-1}$};
            \node (I) at (4.75, -4) [label={[xshift=-3pt]right:{\footnotesize $1$}}] {\footnotesize $z_1$};
            \node (J) at (-4.25, -8) [label={[xshift=3pt]left:{\footnotesize $\frac{1}{2}$}}] {\footnotesize $v_{-2}$};
            \node (K) at (-1.25, -8) [label={[xshift=3pt]left:{\footnotesize $\frac{1}{2}$}}] {\footnotesize $y_{-2}$};
            \node (L) at (1.25, -8) [label={[xshift=-3pt]right:{\footnotesize $\frac{1}{2}$}}] {\footnotesize $v_2$};
            \node (M) at (4.25, -8) [label={[xshift=-3pt]right:{\footnotesize $\frac{1}{2}$}}] {\footnotesize $y_2$};
            
            \draw[double] (A) -- (B);
            \draw[double] (A) -- (C);
            \draw[double] (B) -- (D);
            \draw[double] (C) -- (E);
            \draw[double] (D) -- (F);
            \draw[double] (E) -- (G);
            \draw (B) -- (H);
            \draw (C) -- (I);
            \draw (F) -- (J);
            \draw (F) -- (K);
            \draw (G) -- (L);
            \draw (G) -- (M);

            \end{tikzpicture}
        }
    \end{minipage}
    \hspace{0.125\textwidth}
    \begin{minipage}{0.2\textwidth}
        \centering
        \subfigure[]{
            \begin{tikzpicture}[scale=0.35]

            \tikzstyle{every node}=[circle, draw, minimum size=0.5cm, inner sep=0pt]
            
            \node (A) at (0, 0) [label={[xshift=3pt]left:{\footnotesize $8$}}] {\footnotesize $y_0$};
            \node (B) at (-2.75, -2) [label={[xshift=-3pt]right:{\footnotesize $4$}}] {\footnotesize $y_{-1}$};
            \node (C) at (2.75, -2) [label={[xshift=3pt]left:{\footnotesize $4$}}] {\footnotesize $y_1$};
            \node (D) at (-2.75, -4) [label={[xshift=-3pt]right:{\footnotesize $2$}}] {\footnotesize $y_{-2}$};
            \node (E) at (2.75, -4) [label={[xshift=3pt]left:{\footnotesize $2$}}] {\footnotesize $y_2$};
            \node (F) at (-2.75, -6) [label={[xshift=-3pt]right:{\footnotesize $1$}}] {\footnotesize $z_{-2}$};
            \node (G) at (2.75, -6) [label={[xshift=3pt]left:{\footnotesize $1$}}] {\footnotesize $z_2$};
            \node (H) at (-4.75, -4) [label={[xshift=3pt]left:{\footnotesize $1$}}] {\footnotesize $z_{-1}$};
            \node (I) at (4.75, -4) [label={[xshift=-3pt]right:{\footnotesize $1$}}] {\footnotesize $z_1$};
            \node (J) at (-4.25, -8) [label={[xshift=3pt]left:{\footnotesize $\frac{1}{2}$}}] {\footnotesize $x_{-2}$};
            \node (K) at (-1.25, -8) [label={[xshift=3pt]left:{\footnotesize $\frac{1}{2}$}}] {\footnotesize $v_{-2}$};
            \node (L) at (1.25, -8) [label={[xshift=-3pt]right:{\footnotesize $\frac{1}{2}$}}] {\footnotesize $x_2$};
            \node (M) at (4.25, -8) [label={[xshift=-3pt]right:{\footnotesize $\frac{1}{2}$}}] {\footnotesize $v_2$};
            
            \draw[double] (A) -- (B);
            \draw[double] (A) -- (C);
            \draw[double] (B) -- (D);
            \draw[double] (C) -- (E);
            \draw[double] (D) -- (F);
            \draw[double] (E) -- (G);
            \draw (B) -- (H);
            \draw (C) -- (I);
            \draw (F) -- (J);
            \draw (F) -- (K);
            \draw (G) -- (L);
            \draw (G) -- (M);
            
            \end{tikzpicture}
        }
    \end{minipage}
    \caption{Set $\Tau = \{T_1, T_2, T_3 \}$ of $z_0$-strategies for the Flower graph $J_5$, where (a) shows $T_1$, (b) $T_2$, and (c) $T_3$. We omit the root $z_0$ in the representation of the subtrees, and we represent the trunk edges and branches by double lines and standard lines, respectively.}
    \label{fig:J5_complete}
\end{figure}

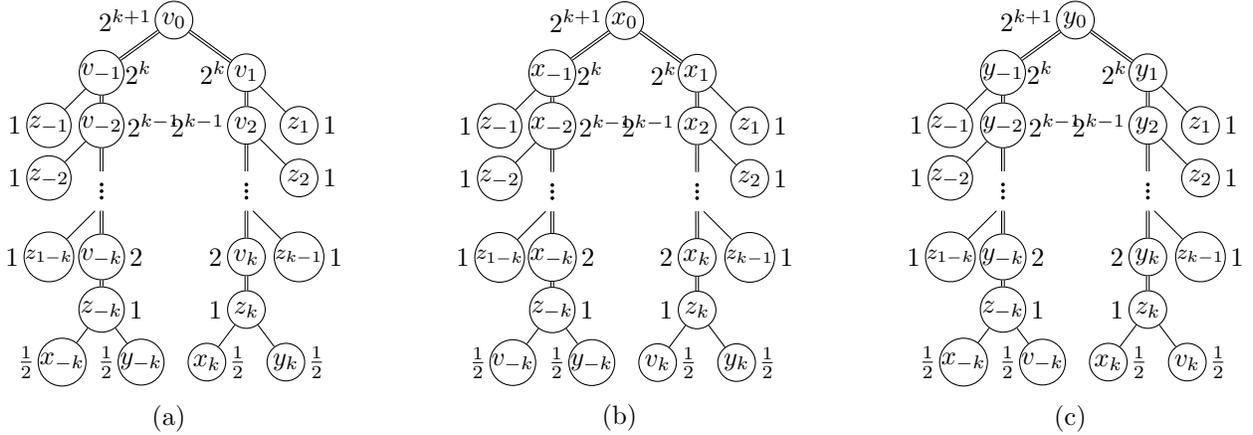
\begin{figure}[ht]
    \centering
    \hspace{-0.1\textwidth}
    \begin{minipage}{0.2\textwidth}
        \centering
        \subfigure[]{
            \begin{tikzpicture}[scale=0.35]

            \tikzstyle{every node}=[circle, draw, minimum size=0.5cm, inner sep=0pt]

            \node (A) at (0, 0) [label=left:{\footnotesize $2^{k + 1}$}] {\footnotesize $v_0$};
            \node (B) at (-2.75, -2) [label={[xshift=-2pt]right:{\footnotesize $2^{k}$}}] {\footnotesize $v_{-1}$};
            \node (C) at (2.75, -2) [label={[xshift=2pt]left:{\footnotesize $2^{k}$}}] {\footnotesize $v_1$};
            \node (D) at (-2.75, -4) [label=right:{\footnotesize $2^{k - 1}$}] {\footnotesize $v_{-2}$};
            \node (E) at (2.75, -4) [label=left:{\footnotesize $2^{k - 1}$}] {\footnotesize $v_2$};
            \node (F) at (-2.75, -9) [label={[xshift=-3pt]right:{\footnotesize $2$}}] {\footnotesize $v_{-k}$};
            \node (G) at (2.75, -9) [label={[xshift=3pt]left:{\footnotesize $2$}}] {\footnotesize $v_k$};
            \node (H) at (-2.75, -11) [label={[xshift=-3pt]right:{\footnotesize $1$}}] {\footnotesize $z_{-k}$};
            \node (I) at (2.75, -11) [label={[xshift=3pt]left:{\footnotesize $1$}}] {\footnotesize $z_k$};
            \node (J) at (-4.75, -4) [label={[xshift=3pt]left:{\footnotesize $1$}}] {\footnotesize $z_{-1}$};
            \node (K) at (4.75, -4) [label={[xshift=-3pt]right:{\footnotesize $1$}}] {\footnotesize $z_1$};
            \node (L) at (-4.75, -6) [label={[xshift=3pt]left:{\footnotesize $1$}}] {\footnotesize $z_{-2}$};
            \node (M) at (4.75, -6) [label={[xshift=-3pt]right:{\footnotesize $1$}}] {\footnotesize $z_2$};
            \node (N) at (-4.75, -9) [label={[xshift=3pt]left:{\footnotesize $1$}}] {\scriptsize $z_{1 - k}$};
            \node (O) at (4.75, -9) [label={[xshift=-3pt]right:{\footnotesize $1$}}] {\scriptsize $z_{k - 1}$};
            \node (P) at (-4.25, -13) [label={[xshift=3pt]left:{\footnotesize $\frac{1}{2}$}}] {\footnotesize $x_{-k}$};
            \node (Q) at (-1.25, -13) [label={[xshift=3pt]left:{\footnotesize $\frac{1}{2}$}}] {\footnotesize $y_{-k}$};
            \node (R) at (1.25, -13) [label={[xshift=-3pt]right:{\footnotesize $\frac{1}{2}$}}] {\footnotesize $x_k$};
            \node (S) at (4.25, -13) [label={[xshift=-3pt]right:{\footnotesize $\frac{1}{2}$}}] {\footnotesize $y_k$};

            \fill (-2.75, -6.25) circle (2pt);
            \fill (2.75, -6.25) circle (2pt);
            \fill (-2.75, -6.5) circle (2pt);
            \fill (2.75, -6.5) circle (2pt);
            \fill (-2.75, -6.75) circle (2pt);
            \fill (2.75, -6.75) circle (2pt);
            
            \draw[double] (A) -- (B);
            \draw[double] (A) -- (C);
            \draw[double] (B) -- (D);
            \draw[double] (C) -- (E);
            \draw[double] (D) -- (-2.75, -5.75);
            \draw[double] (E) -- (2.75, -5.75);
            \draw[double] (-2.75, -7.25) -- (F);
            \draw[double] (2.75, -7.25) -- (G);
            \draw[double] (F) -- (H);
            \draw[double] (G) -- (I);
            \draw (B) -- (J);
            \draw (C) -- (K);
            \draw (D) -- (L);
            \draw (E) -- (M);
            \draw (-3, -7.25) -- (N);
            \draw (3, -7.25) -- (O);
            \draw (H) -- (P);
            \draw (H) -- (Q);
            \draw (I) -- (R);
            \draw (I) -- (S);
            
            \end{tikzpicture}
        }
    \end{minipage}
    \hspace{0.125\textwidth}
    \begin{minipage}{0.2\textwidth}
        \centering
        \subfigure[]{
            \begin{tikzpicture}[scale=0.35]

            \tikzstyle{every node}=[circle, draw, minimum size=0.5cm, inner sep=0pt]

            \node (A) at (0, 0) [label=left:{\footnotesize $2^{k + 1}$}] {\footnotesize $x_0$};
            \node (B) at (-2.75, -2) [label={[xshift=-2pt]right:{\footnotesize $2^{k}$}}] {\footnotesize $x_{-1}$};
            \node (C) at (2.75, -2) [label={[xshift=2pt]left:{\footnotesize $2^{k}$}}] {\footnotesize $x_1$};
            \node (D) at (-2.75, -4) [label=right:{\footnotesize $2^{k - 1}$}] {\footnotesize $x_{-2}$};
            \node (E) at (2.75, -4) [label=left:{\footnotesize $2^{k - 1}$}] {\footnotesize $x_2$};
            \node (F) at (-2.75, -9) [label={[xshift=-3pt]right:{\footnotesize $2$}}] {\footnotesize $x_{-k}$};
            \node (G) at (2.75, -9) [label={[xshift=3pt]left:{\footnotesize $2$}}] {\footnotesize $x_k$};
            \node (H) at (-2.75, -11) [label={[xshift=-3pt]right:{\footnotesize $1$}}] {\footnotesize $z_{-k}$};
            \node (I) at (2.75, -11) [label={[xshift=3pt]left:{\footnotesize $1$}}] {\footnotesize $z_k$};
            \node (J) at (-4.75, -4) [label={[xshift=3pt]left:{\footnotesize $1$}}] {\footnotesize $z_{-1}$};
            \node (K) at (4.75, -4) [label={[xshift=-3pt]right:{\footnotesize $1$}}] {\footnotesize $z_1$};
            \node (L) at (-4.75, -6) [label={[xshift=3pt]left:{\footnotesize $1$}}] {\footnotesize $z_{-2}$};
            \node (M) at (4.75, -6) [label={[xshift=-3pt]right:{\footnotesize $1$}}] {\footnotesize $z_2$};
            \node (N) at (-4.75, -9) [label={[xshift=3pt]left:{\footnotesize $1$}}] {\scriptsize $z_{1 - k}$};
            \node (O) at (4.75, -9) [label={[xshift=-3pt]right:{\footnotesize $1$}}] {\scriptsize $z_{k - 1}$};
            \node (P) at (-4.25, -13) [label={[xshift=3pt]left:{\footnotesize $\frac{1}{2}$}}] {\footnotesize $v_{-k}$};
            \node (Q) at (-1.25, -13) [label={[xshift=3pt]left:{\footnotesize $\frac{1}{2}$}}] {\footnotesize $y_{-k}$};
            \node (R) at (1.25, -13) [label={[xshift=-3pt]right:{\footnotesize $\frac{1}{2}$}}] {\footnotesize $v_k$};
            \node (S) at (4.25, -13) [label={[xshift=-3pt]right:{\footnotesize $\frac{1}{2}$}}] {\footnotesize $y_k$};

            \fill (-2.75, -6.25) circle (2pt);
            \fill (2.75, -6.25) circle (2pt);
            \fill (-2.75, -6.5) circle (2pt);
            \fill (2.75, -6.5) circle (2pt);
            \fill (-2.75, -6.75) circle (2pt);
            \fill (2.75, -6.75) circle (2pt);
            
            \draw[double] (A) -- (B);
            \draw[double] (A) -- (C);
            \draw[double] (B) -- (D);
            \draw[double] (C) -- (E);
            \draw[double] (D) -- (-2.75, -5.75);
            \draw[double] (E) -- (2.75, -5.75);
            \draw[double] (-2.75, -7.25) -- (F);
            \draw[double] (2.75, -7.25) -- (G);
            \draw[double] (F) -- (H);
            \draw[double] (G) -- (I);
            \draw (B) -- (J);
            \draw (C) -- (K);
            \draw (D) -- (L);
            \draw (E) -- (M);
            \draw (-3, -7.25) -- (N);
            \draw (3, -7.25) -- (O);
            \draw (H) -- (P);
            \draw (H) -- (Q);
            \draw (I) -- (R);
            \draw (I) -- (S);

            \end{tikzpicture}
        }
    \end{minipage}
    \hspace{0.125\textwidth}
    \begin{minipage}{0.2\textwidth}
        \centering
        \subfigure[]{
            \begin{tikzpicture}[scale=0.35]

            \tikzstyle{every node}=[circle, draw, minimum size=0.5cm, inner sep=0pt]

            \node (A) at (0, 0) [label=left:{\footnotesize $2^{k + 1}$}] {\footnotesize $y_0$};
            \node (B) at (-2.75, -2) [label={[xshift=-2pt]right:{\footnotesize $2^{k}$}}] {\footnotesize $y_{-1}$};
            \node (C) at (2.75, -2) [label={[xshift=2pt]left:{\footnotesize $2^{k}$}}] {\footnotesize $y_1$};
            \node (D) at (-2.75, -4) [label=right:{\footnotesize $2^{k - 1}$}] {\footnotesize $y_{-2}$};
            \node (E) at (2.75, -4) [label=left:{\footnotesize $2^{k - 1}$}] {\footnotesize $y_2$};
            \node (F) at (-2.75, -9) [label={[xshift=-3pt]right:{\footnotesize $2$}}] {\footnotesize $y_{-k}$};
            \node (G) at (2.75, -9) [label={[xshift=3pt]left:{\footnotesize $2$}}] {\footnotesize $y_k$};
            \node (H) at (-2.75, -11) [label={[xshift=-3pt]right:{\footnotesize $1$}}] {\footnotesize $z_{-k}$};
            \node (I) at (2.75, -11) [label={[xshift=3pt]left:{\footnotesize $1$}}] {\footnotesize $z_k$};
            \node (J) at (-4.75, -4) [label={[xshift=3pt]left:{\footnotesize $1$}}] {\footnotesize $z_{-1}$};
            \node (K) at (4.75, -4) [label={[xshift=-3pt]right:{\footnotesize $1$}}] {\footnotesize $z_1$};
            \node (L) at (-4.75, -6) [label={[xshift=3pt]left:{\footnotesize $1$}}] {\footnotesize $z_{-2}$};
            \node (M) at (4.75, -6) [label={[xshift=-3pt]right:{\footnotesize $1$}}] {\footnotesize $z_2$};
            \node (N) at (-4.75, -9) [label={[xshift=3pt]left:{\footnotesize $1$}}] {\scriptsize $z_{1 - k}$};
            \node (O) at (4.75, -9) [label={[xshift=-3pt]right:{\footnotesize $1$}}] {\scriptsize $z_{k - 1}$};
            \node (P) at (-4.25, -13) [label={[xshift=3pt]left:{\footnotesize $\frac{1}{2}$}}] {\footnotesize $x_{-k}$};
            \node (Q) at (-1.25, -13) [label={[xshift=3pt]left:{\footnotesize $\frac{1}{2}$}}] {\footnotesize $v_{-k}$};
            \node (R) at (1.25, -13) [label={[xshift=-3pt]right:{\footnotesize $\frac{1}{2}$}}] {\footnotesize $x_k$};
            \node (S) at (4.25, -13) [label={[xshift=-3pt]right:{\footnotesize $\frac{1}{2}$}}] {\footnotesize $v_k$};

            \fill (-2.75, -6.25) circle (2pt);
            \fill (2.75, -6.25) circle (2pt);
            \fill (-2.75, -6.5) circle (2pt);
            \fill (2.75, -6.5) circle (2pt);
            \fill (-2.75, -6.75) circle (2pt);
            \fill (2.75, -6.75) circle (2pt);
            
            \draw[double] (A) -- (B);
            \draw[double] (A) -- (C);
            \draw[double] (B) -- (D);
            \draw[double] (C) -- (E);
            \draw[double] (D) -- (-2.75, -5.75);
            \draw[double] (E) -- (2.75, -5.75);
            \draw[double] (-2.75, -7.25) -- (F);
            \draw[double] (2.75, -7.25) -- (G);
            \draw[double] (F) -- (H);
            \draw[double] (G) -- (I);
            \draw (B) -- (J);
            \draw (C) -- (K);
            \draw (D) -- (L);
            \draw (E) -- (M);
            \draw (-3, -7.25) -- (N);
            \draw (3, -7.25) -- (O);
            \draw (H) -- (P);
            \draw (H) -- (Q);
            \draw (I) -- (R);
            \draw (I) -- (S);
            
            \end{tikzpicture}
        }
    \end{minipage}
    \caption{Set $\Tau = \{T_1, T_2, T_3 \}$ of $z_0$-strategies for the Flower graph $J_m$, with $m = 2k + 1, m > 5$, where (a) shows $T_1$, (b) $T_2$, and (c) $T_3$. We omit the root $z_0$ in the representation of the subtrees, and we represent the trunk edges and branches by double lines and standard lines, respectively.}
    \label{fig:Jm_complete}
\end{figure}

\newpage

\section{Cube graph achieves the tight bound of Theorem~\ref{thm:min}}

The vertices of the cube graph $Q_d$ can be represented by binary strings of dimension $d$. Since $Q_d$ is vertex-transitive, we can choose, without loss of generality, the target $r$ as the binary string with all zero bits, denoted by $\bm{0}$. The eccentricity of the vertex $\bm{0}$ is $d$ and the binary string with all one bits, denoted as $\bm{1}$, is the only $\bm{0}$-peripheral vertex. The $j^{\text{th}}$ neighborhood of $\bm{0}$ consists of the $\binom{d}{j}$ vertices with exactly $j$ one bits. Note that every vertex $v \in V(Q_q - \{\bm{0}, \bm{1} \})$ is part of a shortest path from $\bm{0}$ to $\bm{1}$, satisfying the necessary condition to the tight bound of Theorem~\ref{thm:min}. Eq.~\eqref{eq:low} gives
\begin{equation}
    \label{eq:cube}
        \lambda_{Q_d, \bm{0}} \geq \sum_{j = 1}^{d} \max \left\{ \binom{d}{j}, 2^{d-j} \right\}
\end{equation}
since $I_{\text{sur}} = \left\{ 1 \leq j \leq e(r) - 1: 2^{d - j} > \binom{d}{j}\right\}$ and $2^d - 1 = \sum_{j = 1}^{d} \binom{d}{j}$. 

The certificate $\Tau$, provided by Hurlbert~\cite{WFL} and described as follows, shows the optimality of our bound. We have built a strategy $T \in \Tau$ as detailed below. We iterate the index $j$ from $1$ to $d$. For $j = 1$, we add the vertex $v_1 \in N(\bm{0})$ to be the only child of the root $\bm{0}$, setting $\omega_T(v_1) = 2^{d - 1}$. For $j > 1$, while $2^{d - j} > \binom{d}{j}$, we add a vertex $v_j \in N_j(\bm{0})$ to be the only child of $v_{ j - 1}$, setting $\omega_T(v_j) = 2^{d - j}$. Now, let $a_j = \left\lfloor \binom{d}{j}/2^{d - j}  \right\rfloor$ and $b_j = \binom{d}{j} \ \text{mod} \ 2^{d - j}$. For each $j$ such that $2^{d - j} \leq \binom{d}{j}$, we add $a_j + 1$ vertices of $N_j(r)$ in the subtree $T$, assigning weight $2^{d - j}$ for $a_j$ of them and weight $b_j$ to the remainder. All of these vertices must be connected to the vertices that have weight $2^{d - j + 1}$, which is ensured by the fact that each vertex of $N_{j - 1}(\bm{0})$ has $d - j + 1$ neighbors on $N_j(\bm{0})$, and $a_j < (d - j + 1) a_{j-1}$ whenever $d - j + 1 < d - 1$. This completes the construction of the strategy $T$. Each remaining strategy of $\Tau$ is generated by applying one of the $d!$ fixed permutations of the bit positions in the binary string of all vertices in the subtree $T$. Observe that $\sum_{v \in N_j(\bm{0})} \omega_\Tau(v) = 2^{d - j} \omega_{\text{min}}$ if $2^{d - j} > \binom{d}{j}$ and $\sum_{v \in N_j(\bm{0})} \omega_\Tau(v) = \binom{d}{j} \omega_{\text{min}}$ otherwise, where $\omega_{\text{min}} = d!$. Therefore, $\lambda_\Tau$ achieves the equality of Eq.~\eqref{eq:cube}, as claimed.

\section{Application of Theorem~\ref{thm:min} to the snark graphs}

Having the Eqs.~\eqref{eq:low} and~\eqref{eq:alt} in our arsenal, we compute the lower bound given by Theorem~\ref{thm:min} in the minimum WFL ratio $\lambda_{G, r}$ for Blanuša~2 and the Flower graphs. For $B_2$, since for any $r \in V(B_2)$, $|N(r)| = 3$, $|N_2(r)| = 6$, and $|N_3(r)| \geq 4$, we have $I_{\text{sur}} = \{1 \}$ and Eq.~\eqref{eq:low} gives $\lambda_{B_2, r} \geq 22$. 

On $J_m \ (m = 2k + 1, m \geq 3)$, we compute the size of each neighborhood through the distance between the target and the other vertices. For each target $x_0$, $v_0$, and $z_0$, we list these distances as follows.
\begin{itemize}
    \item Target $x_0$: for any $1 \leq i \leq k - 1$, $1 \leq j \leq k$, $d_{Jm}(x_0, z_0) = 1$, $d_{Jm}(x_0, v_0) = 2$, $d_{Jm}(x_0, y_0) = 2$, $d_{Jm}(x_0, x_j) = j$, $d_{Jm}(x_0, x_{-j}) = j$, $d_{Jm}(x_0, z_j) = j + 1$, $d_{Jm}(x_0, z_{-j}) = j + 1$, $d_{Jm}(x_0, v_j) = j + 2$, $d_{Jm}(x_0, v_{-j}) = j + 2$, $d_{Jm}(x_0, y_i) = i + 2$, $d_{Jm}(x_0, y_{-i}) = i + 2$, $d_{Jm}(x_0, y_k) = k + 1$, and $d_{Jm}(x_0, y_{-k}) = k + 1$.
    \item Target $v_0$: for any $\alpha \in \{x, y\}$ and $1 \leq j \leq k$, $d_{Jm}(v_0, z_0) = 1$, $d_{Jm}(v_0, \alpha_0) = 2$, $d_{Jm}(v_0, v_j) = j$, $d_{Jm}(v_0, v_{-j}) = j$, $d_{Jm}(v_0, z_j) = j + 1$, $d_{Jm}(v_0, z_{-j}) = j + 1$, $d_{Jm}(v_0, \alpha_j) = j + 2$, and $d_{Jm}(v_0, \alpha_{-j}) = j + 2$.
    \item Target $z_0$: for any $\alpha \in \{v, x, y\}$ and $1 \leq j \leq k$, $d_{Jm}(z_0, \alpha_0) = 1$, $d_{Jm}(z_0, \alpha_j) = j + 1$, $d_{Jm}(z_0, \alpha_{-j}) = j + 1$, $d_{Jm}(z_0, z_j) = j + 2$, and $d_{Jm}(z_0, z_{-j}) = j + 2$.
\end{itemize}

That way, the sizes of the neighborhoods are given by the following.
\begin{itemize}
    \item Target $x_0$: $|N(x_0)| = 3$, $|N_2(x_0)| = 6$, $|N_j(x_0)| = 8 \ (3 \leq j \leq k + 1)$, and $|N_{k + 2}(x_0)| = 2$.
    \item Target $v_0$: for $J_3$, we have $|N(v_0)| = 3$, $|N_2(v_0)| = 4$, $|N_3(v_0)| = 4$; for $J_5$, $|N(v_0)| = 3$, $|N_2(v_0)| = 6$, $|N_3(v_0)| = 6$, $|N_4(v_0)| = 4$; and for $J_m \ (m \geq 7)$, $|N(v_0)| = 3$, $|N_2(v_0)| = 6$, $|N_j(v_0)| = 8 \ (3 \leq j \leq k)$, $|N_{k + 1}(v_0)| = 6$, and $|N_{k + 2}(v_0)| = 4$.
    \item Target $z_0$: $|N(z_0)| = 3$, $|N_2(z_0)| = 6$, $|N_j(z_0)| = 8 \ (3 \leq j \leq k + 1)$, and $|N_{k + 2}(z_0)| = 2$.
\end{itemize}
Note that for all targets, we have $I_{\text{sur}} = \{1 \}$ for both $J_3$ and $J_5$, resulting from Eq.~\eqref{eq:low} in $\lambda_{J_3, r} \geq 12$ and $\lambda_{J_5, r} \geq 24$. For $J_m \ (m \geq 7)$, in all targets follows that $I_{\text{no}} = \{k, k+1, k+2 \}$ and $|N_{k}(r)| + |N_{k+1}(r)| + |N_{k+2}(r)| = 18$. In this case, since $|I_{\text{sur}}| \gg |I_{\text{no}}|$ when $m$ is large, we use Eq.~\eqref{eq:alt}, resulting in $\lambda_{J_m, r} \geq 2^{k + 2} + 10$. 

In comparison with the state of the art, obtained by Adauto et al.~\cite{WFL_snarks}---as lower bound on the pebbling numbers---we get the same results for $B_2$ and improve the minimum WFL lower bound by $1$ unit in $J_3$ and $2$ units in $J_m$ for $m \geq 5$. Although the gain was modest, we were able to establish the exact value of the minimum WFL ratio for the target $z_0$ on the graph $J_3$.

\section{Omitted certificates}

In this section, we exhibit the certificates for all targets of the Blanuša graph $B_2$ and Flower graphs $J_m \ (m = 2k + 1, m \geq 3)$. For all strategies shown in this section, we omit the root, and the vertices of the trunk are highlighted in bold. Tables~\ref{table:B2_strategies} and~\ref{table:J3_strategies} show the certificates of graphs $B_2$ and $J_3$, respectively. In these tables, in addition to the set $\Tau$ of strategies and its WFL ratio, we show the neighborhood of $r$, the $r$-peripheral vertices, $|\omega_T|$ and $\omega_{\text{min}}$.

\begin{table}[ht]
    \centering
    \begin{adjustbox}{width=1.1\textwidth,center=\textwidth}
    \scriptsize
    \renewcommand{\arraystretch}{1.2}
    \begin{tabular}{|c|c|c|c|c|c|c|c|c|}
        \hline
        \textbf{r} & \textbf{T} & \textbf{Subtrees} & \textbf{Weights} & \textbf{$N(r)$}  & \textbf{$P(r)$} & \textbf{$|\omega_T|$} & \textbf{$\omega_{\text{min}}$} & \textbf{$\lambda_T$} \\
        \hline
        \multirow{3}{*}{\textbf{$x_1$}}  
        & $T_1$ & $(\bm{z_1}, \bm{z_5}, z_2, \bm{z_5'}, z_2', \bm{x_5'}, \bm{z_1'}, \bm{x_2'})$ & $(8, 4, 5/2, 2, 5/4, 1, 1, 1/2)$ & \multirow{3}{*}{$\{z_1, x_3, x_4 \}$} & \multirow{3}{*}{$\{x_1', x_2', x_5', z_1' \}$} & \multirow{3}{*}{$58.5$} & \multirow{3}{*}{$2$} & \multirow{3}{*}{$29.25$} \\
        & $T_2$ & $(\bm{x_3}, \bm{z_3}, x_5, \bm{x_3'}, \bm{x_1'}, \bm{x_5'}, \bm{x_2'}, \bm{z_1'}, z_2')$ & $(8, 4, 2, 2, 1, 1, 1/2, 1/2, 1/4)$ &  &  &  &  & \\
        & $T_3$ & $(\bm{x_4}, \bm{z_4}, x_2, \bm{x_4'}, \bm{x_1'}, \bm{x_2'}, \bm{z_1'}, z_2')$ & $(8, 4, 2, 2, 1, 1, 1/2, 1/2)$ &  &  &  &  &  \\
        \hline
        \multirow{3}{*}{\textbf{$x_2$}}  
        & $T_1$ & $(\bm{z_2}, \bm{z_2'}, z_1, \bm{x_2'}, \bm{z_1'}, \bm{x_1'}, \bm{x_5'}, \bm{x_3'}, x_4', z_5')$ & $(8, 4, 5/2, 2, 2, 1, 1, 1/2, 1/2, 1/2)$ & \multirow{3}{*}{$\{z_2, x_5, x_4 \}$}  & \multirow{3}{*}{$\{x_4', x_5', z_2' \}$} & \multirow{3}{*}{$66.5$} & \multirow{3}{*}{$2.5$} & \multirow{3}{*}{$26.6$} \\
        & $T_2$ & $(\bm{x_5}, \bm{x_3}, \bm{z_5}, \bm{z_3}, \bm{z_5'}, \bm{x_3'}, \bm{x_5'}, \bm{x_1'}, x_2')$ & $(8, 4, 4, 2, 2, 1, 1, 1/2, 1/2)$ &  &  &  &  &  \\
        & $T_3$ & $(\bm{x_4}, \bm{z_4}, x_1, \bm{x_4'}, \bm{z_3}, \bm{x_1'}, \bm{x_3'}, \bm{x_5'}, z_1')$ & $(8, 4, 5/2, 2, 2, 1, 1, 1/2, 1/2)$ &  &  &  &  &  \\
        \hline
        \multirow{3}{*}{\textbf{$x_3$}} 
        & $T_1$ & $(\bm{z_3}, \bm{z_4}, \bm{x_3'}, \bm{x_4'}, \bm{x_1'}, \bm{x_2'}, x_5', \bm{z_1'}, \bm{z_2'})$ & $(8, 4, 5/2, 2, 5/4, 1, 1, 1/2, 1/2)$ & \multirow{3}{*}{$\{z_3, x_1, x_5 \}$} & \multirow{3}{*}{$\{x_2', z_1', z_2' \}$} & \multirow{3}{*}{$58.5$} & \multirow{3}{*}{$2$} & \multirow{3}{*}{$29.25$} \\
        & $T_2$ & $(\bm{x_1}, \bm{z_1}, x_4, \bm{z_2}, \bm{z_2'}, \bm{x_2'}, \bm{z_1'}, x_1')$ & $(8, 4, 2, 2, 1, 1/2, 1/2, 1/4)$ &  &  &  &  &  \\
        & $T_3$ & $(\bm{x_5}, \bm{z_5}, x_2, \bm{z_5'}, \bm{x_5'}, \bm{z_1'}, x_1', \bm{x_2'}, \bm{z_2'})$ & $(8, 4, 2, 2, 1, 1, 1/2, 1/2, 1/2)$ &  &  &  &  &  \\
        \hline
        \multirow{3}{*}{\textbf{$z_1$}}  
        & $T_1$ & $(\bm{x_1}, \bm{x_3}, \bm{x_4}, \bm{z_3}, \bm{z_4}, \bm{x_3'}, \bm{x_4'}, \bm{x_1'}, x_2', x_5')$ & $(8, 4, 4, 2, 2, 1, 1, 1/2, 1/2, 1/2)$ & \multirow{3}{*}{$\{x_1, z_2, z_5 \}$} & \multirow{3}{*}{$\{x_1', x_3', x_4' \}$} & \multirow{3}{*}{$66.5$} & \multirow{3}{*}{$2.5$} & \multirow{3}{*}{$26.6$} \\
        & $T_2$ & $(\bm{z_2}, \bm{z_2'}, x_2, \bm{x_2'}, \bm{z_1'}, \bm{x_1'}, \bm{x_4'}, \bm{x_3'}, z_4)$ & $(8, 4, 5/2, 2, 2, 1, 1, 1/2, 1/2)$ &  &  &  &  &  \\
        & $T_3$ & $(\bm{z_5}, \bm{z_5'}, x_5, \bm{x_5'}, \bm{z_1'}, \bm{x_1'}, \bm{x_3'}, \bm{x_4'}, z_3)$ & $(8, 4, 5/2, 2, 2, 1, 1, 1/2, 1/2)$ &  &  &  &  &  \\
        \hline
        \multirow{3}{*}{\textbf{$z_2$}}  
        & $T_1$ & $(\bm{z_2'}, \bm{z_1'}, \bm{x_1'}, x_2', \bm{x_3'}, x_4', x_5', \bm{z_3})$ & $(8, 4, 2, 2, 1, 1, 1, 1/2)$ & \multirow{3}{*}{$\{z_2', x_2, z_1 \}$} & \multirow{3}{*}{$\{x_3', z_3' \}$} & \multirow{3}{*}{$58$} & \multirow{3}{*}{$2$} & \multirow{3}{*}{$29$} \\
        & $T_2$ & $(\bm{x_2}, \bm{x_4}, x_1, \bm{z_4}, x_3, x_4', \bm{z_3}, \bm{x_3'})$ & $(8, 4, 2, 2, 1, 1, 1, 1/2)$ &  &  &  &  &  \\
        & $T_3$ & $(\bm{z_1}, \bm{z_5}, \bm{x_5}, \bm{z_5'}, \bm{x_3}, \bm{x_5'}, \bm{x_3'}, \bm{z_3})$ & $(8, 4, 2, 2, 1, 1, 1/2, 1/2)$ &  &  &  &  &  \\
        \hline
        \multirow{3}{*}{\textbf{$z_3$}}  
        & $T_1$ & $(\bm{z_4}, \bm{x_4}, \bm{x_4'}, \bm{x_2}, \bm{x_2'}, \bm{z_2}, \bm{z_2'}, z_1, z_1')$ & $(8, 4, 4, 2, 2, 1, 1, 1/2, 1/2)$ & \multirow{3}{*}{$\{z_4, x_3, x_3' \}$} & \multirow{3}{*}{$\{z_2, z_2' \}$} & \multirow{3}{*}{$68$} & \multirow{3}{*}{$2.5$} & \multirow{3}{*}{$27.2$} \\
        & $T_2$ & $(\bm{x_3}, \bm{x_1}, x_5, \bm{z_1}, z_5, \bm{z_2}, x_2, \bm{z_2'}, z_5')$ & $(8, 4, 4, 2, 2, 1, 1/2, 1/2, 1/2)$ &  &  &  &  &  \\
        & $T_3$ & $(\bm{x_3'}, \bm{x_1'}, x_5', \bm{z_1'}, z_5', \bm{z_2'}, x_2', \bm{z_2}, z_5)$ & $(8, 4, 4, 2, 2, 1, 1/2, 1/2, 1/2)$ &  &  &  &  &  \\
        \hline
    \end{tabular}
    \end{adjustbox}
    \caption{Set $\Tau$ of $r$-strategies $T_1$, $T_2$, and $T_3$ for all targets $r$ of the Blanuša graph $B_2$.}
    \label{table:B2_strategies}
\end{table}

\begin{table}[ht]
    \centering
    \begin{adjustbox}{width=1.1\textwidth,center=\textwidth}
    \scriptsize
    \renewcommand{\arraystretch}{1.2}
    \begin{tabular}{|c|c|c|c|c|c|c|c|c|}
        \hline
        \textbf{r} & \textbf{T} & \textbf{Subtrees} & \textbf{Weights} & \textbf{$N(r)$} & \textbf{$P(r)$} & \textbf{$|\omega_T|$} & \textbf{$\omega_{\text{min}}$} & \textbf{$\lambda_T$} \\
        \hline
        \multirow{3}{*}{\textbf{$x_0$}} 
        & $T_1$ & 
        $(\bm{z_0}, \bm{v_0}, y_0, \bm{v_1}, \bm{v_{-1}}, z_1, z_{-1})$ & $(4, 2, 2, 1, 1, 1/2, 1/2)$ & \multirow{3}{*}{$\{z_0, x_1, x_{-1} \}$} & \multirow{3}{*}{$\{v_1, v_{-1} \}$} & \multirow{3}{*}{$32$} & \multirow{3}{*}{$2.5$} & \multirow{3}{*}{$12.8$} \\
        & $T_2$ & $(\bm{x_1}, y_{-1}, \bm{z_1}, \bm{v_1}, y_1 \bm{v_{-1}}, y_0, v_0)$ & $(4, 2, 2, 1, 1/2, 1/2, 1/4, 1/4)$ &  &  &  &  &  \\
        & $T_3$ & $(\bm{x_{-1}}, y_1, \bm{z_{-1}}, \bm{v_{-1}}, y_{-1} \bm{v_{1}}, y_0, v_0)$ & $(4, 2, 2, 1, 1/2, 1/2, 1/4, 1/4)$ &  &  &  &  &  \\
        \hline
        \multirow{3}{*}{\textbf{$v_0$}}  
        & $T_1$ & $(\bm{z_0}, \bm{x_0}, \bm{y_0}, \bm{x_1}, \bm{x_{-1}}, \bm{y_1}, \bm{y_{-1}}, z_1, z_{-1})$ & $(4, 2, 2, 1, 1, 1, 1, 1/2, 1/2)$ & \multirow{3}{*}{$\{z_0, v_1, v_{-1} \}$} & \multirow{3}{*}{$\{x_1, x_{-1}, y_1, y_{-1} \}$} & \multirow{3}{*}{$32$} & \multirow{3}{*}{$2.5$} & \multirow{3}{*}{$12.8$} \\
        & $T_2$ & $(\bm{v_1}, \bm{z_1}, \bm{x_1}, \bm{y_1}, x_0, \bm{x_{-1}}, \bm{y_{-1}})$ & $(4, 2, 1, 1, 1/2, 1/2, 1/2)$ &  &  &  &  &  \\
        & $T_3$ & $(\bm{v_{-1}}, \bm{z_{-1}}, \bm{x_{-1}}, \bm{y_{-1}}, \bm{x_1}, y_0, \bm{y_1})$ & $(4, 2, 1, 1, 1/2, 1/2, 1/2)$ &  &  &  &  &  \\
        \hline
        \multirow{3}{*}{\textbf{$z_0$}}  
        & $T_1$ & $(\bm{v_0}, \bm{v_1}, \bm{v_{-1}}, \bm{z_1}, \bm{z_{-1}}, x_1, x_{-1}, y_1, y_{-1})$ & $(4, 2, 2, 1, 1, 1/2, 1/2, 1/2, 1/2)$ & \multirow{3}{*}{$\{v_0, x_0, y_0 \}$} & \multirow{3}{*}{$\{z_1, z_{-1} \}$} & \multirow{3}{*}{$36$} & \multirow{3}{*}{$3$} & \multirow{3}{*}{$12$} \\
        & $T_2$ & $(\bm{x_0}, \bm{x_1}, \bm{x_{-1}}, \bm{z_1}, \bm{z_{-1}}, y_1, y_{-1}, v_1, v_{-1})$ & $(4, 2, 2, 1, 1, 1/2, 1/2, 1/2, 1/2)$ &  &  &  &  &  \\
        & $T_3$ & $(\bm{y_0}, \bm{y_1}, \bm{y_{-1}}, \bm{z_1}, \bm{z_{-1}}, x_1, x_{-1}, v_1, v_{-1})$ & $(4, 2, 2, 1, 1, 1/2, 1/2, 1/2, 1/2)$ &  &  &  &  &  \\
        \hline
    \end{tabular}
    \end{adjustbox}
    \caption{Set $\Tau$ of $r$-strategies $T_1$, $T_2$, and $T_3$ for all targets $r$ of the Flower graph $J_3$.}
    \label{table:J3_strategies}
\end{table}

The certificates of $J_m \ (m = 2k + 1, m \geq 5)$ are listed as follows. Starting from target $x_0$, we have $N(x_0) = \{z_0, x_1, x_{-1} \}$ and $P(x_0) = \{v_k, v_{-k} \}$. The $x_0$-strategies are given by $\Tau = \{T_1, T_2, T_3 \}$, where

\newpage
\begin{itemize}[leftmargin=0em]
    \item $T_1(\bm{z_0}, \bm{v_0}, \bm{v_1}, \bm{v_{-1}}, \bm{v_2}, \bm{v_{-2}}, \ldots, \bm{v_k}, \bm{v_{-k}}, y_0, z_k, z_{-k}) \\ = (2^{k+1}, 2^k, 2^{k - 1}, 2^{k - 1}, 2^{k - 2}, 2^{k - 2}, \ldots, 1, 1, 5/2, 1/2, 1/2)$,
    \item $T_2(\bm{x_{1}}, \bm{x_{2}}, \ldots, \bm{x_{k - 1}}, z_{1}, z_{2}, \ldots, z_{k - 2}, \bm{x_{k}}, z_{k - 1}, y_{1}, y_{2}, \ldots, y_{k - 2}, \bm{z_{k}}, y_{-k}, y_{k - 1}, \bm{v_{k}}, y_{1 - k}, \bm{v_{-k}}, v_{k - 1}, y_{k})  \\ = (2^{k + 1}, 2^{k}, \ldots 8, 5, 5, \ldots, 5, 4, 3, 5/2, 5/2, \ldots, 5/2, 2, 2, 3/2, 1, 1, 1/2, 1/2, 1/2)$,
    \item $T_3(\bm{x_{-1}}, \bm{x_{-2}}, \ldots, \bm{x_{1 - k}}, z_{-1}, z_{-2}, \ldots, z_{2 - k}, \bm{x_{-k}}, z_{1 - k}, y_{-1}, y_{-2}, \ldots, y_{2 - k}, \bm{z_{-k}}, y_{k}, y_{1 - k}, \bm{v_{-k}}, y_{k - 1}, \bm{v_{k}}, v_{1 - k}, y_{-k}) \\ = (2^{k + 1}, 2^{k}, \ldots 8, 5, 5, \ldots, 5, 4, 3, 5/2, 5/2, \ldots, 5/2, 2, 2, 3/2, 1, 1, 1/2, 1/2, 1/2)$.
\end{itemize}

The total weight $|\omega_\Tau|$ is computed as
\begin{equation}
    \label{eqA1}
    \begin{split}
    |\omega_\Tau| & = \omega_\Tau(z_0) + \sum_{j = 1}^{k - 2} (\omega_\Tau(z_j) + \omega_\Tau(z_{-j})) + (\omega_\Tau(z_{k - 1}) + \omega_\Tau(z_{1 - k})) + (\omega_\Tau(z_k) + \omega_\Tau(z_{-k})) \\ & + \omega_\Tau(v_0) + \sum_{j = 1}^{k - 2} (\omega_\Tau(v_j) + \omega_\Tau(v_{-j})) + \sum_{j = k - 1}^{k} (\omega_\Tau(v_j) + \omega_\Tau(v_{-j})) + \sum_{j = 1}^{k} (\omega_\Tau(x_j) + \omega_\Tau(x_{-j})) \\ & + \omega_\Tau(y_0) + \sum_{j = 1}^{k} (\omega_\Tau(y_j) + \omega_\Tau(y_{-j})) \\ & = 2^{k + 1} + \sum_{j = 1}^{k - 2} 10 + 6 + 5 + 2^{k} + \sum_{j = 1}^{k - 2} 2^{k + 1 - j} + \sum_{j = k - 1}^{k} 5 + \sum_{j = 1}^{k} 2^{k + 3 - j} + 5/2 + \sum_{j = 1}^{k} 5 \\ & =\frac{13}{4} \ 2^{k + 2} + 15k - \frac{25}{2},
    \end{split}
\end{equation}
where we use the formula $\sum_{j = 1}^\alpha 2^{\beta - j} = 2^\beta - 2^{\beta - \alpha}$. Since $\omega_{\text{min}} = 2.5$, $\lambda_\Tau = 2^{k + 2} 13/10 + 6k - 5$. 

Now, for the target $v_0$, follows that $N(v_0) = \{z_0, v_1, v_{-1} \}$ and $P(v_0) = \{x_k, x_{-k}, y_k, y_{-k} \}$. The $v_0$-strategies are given by $\Tau = \{T_1, T_2, T_3 \}$, where
\begin{itemize}
    \item $T_1(\bm{z_0}, \bm{x_0}, \bm{y_0}, \bm{x_1}, \bm{x_{-1}}, \bm{y_1}, \bm{y_{-1}}, \bm{x_2}, \bm{x_{-2}}, \bm{y_2}, \bm{y_{-2}}, \ldots, \bm{x_k}, \bm{x_{-k}}, \bm{y_k}, \bm{y_{-k}}, z_k, z_{-k}) \\ = (2^{k + 1}, 2^{k}, 2^{k}, 2^{k - 1}, 2^{k - 1}, 2^{k - 1}, 2^{k - 1}, 2^{k - 2}, 2^{k - 2}, 2^{k - 2}, 2^{k - 2}, \ldots, 1, 1, 1, 1, 1/2, 1/2)$,
    \item $T_2(\bm{v_1}, \bm{v_2}, \ldots, \bm{v_k}, z_1, z_2, \ldots, z_{k - 1}, \bm{z_k}, \bm{x_k}, \bm{y_k}, \bm{x_{-k}}, \bm{y_{-k}}, x_{k - 1}, y_{k - 1}) \\ = (2^{k + 1}, 2^{k}, \ldots, 4, 5/2, 5/2, \ldots, 5/2, 2, 1, 1, 1/2, 1/2, 1/2, 1/2)$,
    \item $T_3(\bm{v_{-1}}, \bm{v_{-2}}, \ldots, \bm{v_{-k}}, z_{-1}, z_{-2}, \ldots, z_{1 - k}, \bm{z_{-k}}, \bm{x_{-k}}, \bm{y_{-k}}, \bm{x_{k}}, \bm{y_{k}}, x_{1 - k}, y_{1 - k}) \\ = (2^{k + 1}, 2^{k}, \ldots, 4, 5/2, 5/2, \ldots, 5/2, 2, 1, 1, 1/2, 1/2, 1/2, 1/2)$.
\end{itemize}
Let $A = \{x, y\}$. The total weight $|\omega_\Tau|$ is given by
\begin{equation}
    \label{eqA2}
    \begin{split}
    |\omega_\Tau| & = \sum_{\alpha \in A} \omega_\Tau(\alpha_0) + \sum_{\alpha \in A} \sum_{j = 1}^{k - 2} (\omega_\Tau(\alpha_j) + \omega_\Tau(\alpha_{-j})) + \sum_{\alpha \in A} \sum_{j = k - 1}^k (\omega_\Tau(\alpha_j) + \omega_\Tau(\alpha_{-j})) \\ & + \sum_{j = 1}^{k} (\omega_\Tau(v_j) + \omega_\Tau(v_{-j})) + \omega_\Tau(z_0) + \sum_{j = 1}^k (\omega_\Tau(z_k) + \omega_\Tau(z_{-k})) \\ & = \sum_{\alpha \in A} 2^{k} + \sum_{\alpha \in A} \sum_{j = 1}^{k - 2} 2^{k + 1 - j} + \sum_{\alpha \in A} \sum_{j = k - 1}^k 5 + \sum_{j = 1}^{k} 2^{k + 3 - j} + 2^{k + 1} + \sum_{j = 1}^k 5 \\ & =  2^{k + 1} + \sum_{j = 1}^{k - 2} 2^{k + 2 - j} + \sum_{j = k - 1}^k 10 + \sum_{j = 1}^{k} 2^{k + 3 - j} + 2^{k + 1} + \sum_{j = 1}^k 5 \\ & = 4 \ 2^{k + 2} + 5k - 4.
    \end{split}
\end{equation}
Since $\omega_{\text{min}} = 2.5$, $\lambda_\Tau = 2^{k + 2} 8/5 + 2k - 8/5$. 

To finish, for the target $z_0$, we have $N(z_0) = \{v_0, x_0, y_0 \} $ and $P(z_0) = \{z_k, z_{-k} \}$. The $z_0$-strategies are given by $\Tau = \{T_1, T_2, T_3 \}$, where
\begin{itemize}
    \item $T_1(\bm{v_0}, \bm{v_1}, \bm{v_{-1}}, \bm{v_2}, \bm{v_{-2}}, \ldots, \bm{v_k}, \bm{v_{-k}}, z_1, z_{-1}, z_2, z_{-2}, \ldots, z_{k-1}, z_{1 - k}, \bm{z_k}, \bm{z_{-k}}, x_k, x_{-k}, y_k, y_{-k}) \\ = (2^{k + 1}, 2^{k}, 2^{k}, 2^{k - 1}, 2^{k - 1}, \ldots, 1, 1, 1, 1, 1, 1, \ldots, 1, 1, 1, 1, 1/2, 1/2, 1/2, 1/2)$, 
    \item $T_2(\bm{x_0}, \bm{x_1}, \bm{x_{-1}}, \bm{x_2}, \bm{x_{-2}}, \ldots, \bm{x_k}, \bm{x_{-k}}, z_1, z_{-1}, z_2, z_{-2}, \ldots, z_{k-1}, z_{1 - k}, \bm{z_k}, \bm{z_{-k}}, v_k, v_{-k}, y_k, y_{-k}) \\ =(2^{k + 1}, 2^{k}, 2^{k}, 2^{k - 1}, 2^{k - 1}, \ldots, 1, 1, 1, 1, 1, 1, \ldots, 1, 1, 1, 1, 1/2, 1/2, 1/2, 1/2)$,
    \item $T_3(\bm{y_0}, \bm{y_1}, \bm{y_{-1}}, \bm{y_2}, \bm{y_{-2}}, \ldots, \bm{y_k}, \bm{y_{-k}}, z_1, z_{-1}, z_2, z_{-2}, \ldots, z_{k-1}, z_{1 - k}, \bm{z_k}, \bm{z_{-k}}, x_k, x_{-k}, v_k, v_{-k}) \\ = (2^{k + 1}, 2^{k}, 2^{k}, 2^{k - 1}, 2^{k - 1}, \ldots, 1, 1, 1, 1, 1, 1, \ldots, 1, 1, 1, 1, 1/2, 1/2, 1/2, 1/2)$.
\end{itemize}
Let $B = \{v, x, y\}$. The total weight $|\omega_\Tau|$ is computed as
\begin{equation}
    \label{eqA3}
    \begin{split}
    |\omega_\Tau| & =  \sum_{\beta \in B} \omega_\Tau(\beta_0) + \sum_{\beta \in B} \sum_{j = 1}^{k - 1} (\omega_\Tau(\beta_j) + \omega_\Tau(\beta_{-j})) + \sum_{\beta \in B} (\omega_\Tau(\beta_k) + \omega_\Tau(\beta_{-k})) + \sum_{j = 1}^k (\omega_\Tau(z_k) + \omega_\Tau(z_{-k})) \\ & = \sum_{\beta \in B} 2^{k + 1} + \sum_{\beta \in B} \sum_{j = 1}^{k - 1} 2^{k + 2 - j} + \sum_{\beta \in B} 6 + \sum_{j = 1}^k 6 \\ & = 3 \ 2^{k + 1} + 3 \sum_{j = 1}^{k - 1} 2^{k + 2 - j} + 18 + 6k \\ & = \frac{9}{2} \ 2^{k + 2} + 6k - 6,
    \end{split}
\end{equation} Since $\omega_{\text{min}} = 3$, $\lambda_\Tau = 2^{k + 2} 3/2 + 2k - 2$.

Note that the general certificate for the target $z_0$ on Flower graphs also applies for $J_3$, with this particular case shown in Table~\ref{table:J3_strategies}. This is not the case for the targets $x_0$ and $v_0$, whose certificates in Table~\ref{table:J3_strategies} differ from the general certificate for the class of Flower graphs.

\section{Blanuša $B_1$}

According to Adauto et al.~\cite{WFL_snarks}, analogous arguments can be done for the Blanuša snarks $B_1$ and $B_2$, so they present only arguments for $B_2$, claiming the upper bound $\pi(B_i) \leq 34$, for both $B_1$ and $B_2$. We have improved the upper bound for the pebbling number of the Blanuša snark $B_2$, establishing in Theorem~\ref{thm:Blan} that $\pi(B_2) \leq 30$. We establish next that $\pi(B_1) \leq 31$.

The Blanuša $B_1$ graph is shown in Figure~\ref{fig:B1}. There are five vertex classes on $B_1$, which can be represented, without loss of generality, as the targets $a_1$, $b_1$, $c_1$, $d_1$, and $e_1$. We apply our heuristic to $B_1$, obtaining the strategies shown in Table~\ref{table:B1_strategies}. In this table, in addition to the set $\Tau$ of strategies and its WFL ratio, we show the neighborhood of $r$, the $r$-peripheral vertices, $|\omega_T|$ and $\omega_{\text{min}}$. Note that $\lambda_{B_1} \leq 30$, which implies that $\pi(B_1) \leq 31$. Furthermore, with arguments similar to those used for $B_2$, we can conclude by Eq.~\eqref{eq:low} that Theorem~\ref{thm:min} gives the lower bound $\lambda_{B_1} \geq 22$.

\begin{figure}[ht]
    \centering
    \begin{tikzpicture}[scale=3]
    \tikzstyle{every node}=[circle, draw, minimum size=0.5cm, inner sep=0pt]

    \node (A0) at (0, 1) {\footnotesize $a_1$};
    \node (A1) at ({sin(360/17)}, {cos(360/17)}) {\footnotesize $b_1$};
    \node (A2) at ({sin(2*360/17)}, {cos(2*360/17)}) {\footnotesize $c_1$};
    \node (A3) at ({sin(3*360/17)}, {cos(3*360/17)}) {\footnotesize $b_2$};
    \node (A4) at ({sin(4*360/17)}, {cos(4*360/17)}) {\footnotesize $c_2$};
    \node (A5) at ({sin(5*360/17)}, {cos(5*360/17)}) {\footnotesize $d_1$};
    \node (A6) at ({sin(6*360/17)}, {cos(6*360/17)}) {\footnotesize $e_1$};
    \node (A7) at ({sin(7*360/17)}, {cos(7*360/17)}) {\footnotesize $d_2$};
    \node (A8) at ({sin(8*360/17)}, {cos(8*360/17)}) {\footnotesize $e_2$};
    \node (A9) at ({sin(9*360/17)}, {cos(9*360/17)}) {\footnotesize $e_2'$};
    \node (A10) at ({sin(10*360/17)}, {cos(10*360/17)}) {\footnotesize $d_2'$};
    \node (A11) at ({sin(11*360/17)}, {cos(11*360/17)}) {\footnotesize $e_1'$};
    \node (A12) at ({sin(12*360/17)}, {cos(12*360/17)}) {\footnotesize $d_1'$};
    \node (A13) at ({sin(13*360/17)}, {cos(13*360/17)}) {\footnotesize $c_2'$};
    \node (A14) at ({sin(14*360/17)}, {cos(14*360/17)}) {\footnotesize $b_2'$};
    \node (A15) at ({sin(15*360/17)}, {cos(15*360/17)}) {\footnotesize $c_1'$};
    \node (A16) at ({sin(16*360/17)}, {cos(16*360/17)}) {\footnotesize $b_1'$};
    \node (A17) at (0, 0) {\footnotesize $a_1'$};

    \draw (A0) -- (A1);
    \draw (A1) -- (A2);
    \draw (A2) -- (A3);
    \draw (A3) -- (A4);
    \draw (A4) -- (A5);
    \draw (A5) -- (A6);
    \draw (A6) -- (A7);
    \draw (A7) -- (A8);
    \draw (A8) -- (A9);
    \draw (A9) -- (A10);
    \draw (A10) -- (A11);
    \draw (A11) -- (A12);
    \draw (A12) -- (A13);
    \draw (A13) -- (A14);
    \draw (A14) -- (A15);
    \draw (A15) -- (A16);
    \draw (A16) -- (A0);
    \draw (A0) -- (A17);
    \draw (A17) -- (A3);
    \draw (A17) -- (A14);
    \draw (A1) -- (A13);
    \draw (A16) -- (A4);
    \draw (A2) -- (A7);
    \draw (A15) -- (A10);
    \draw (A5) -- (A9);
    \draw (A16) -- (A0);
    \draw (A12) -- (A8);
    \draw (A6) -- (A11);
    
\end{tikzpicture}
    \caption{Blanuša graph $B_1$.}
    \label{fig:B1}
\end{figure}
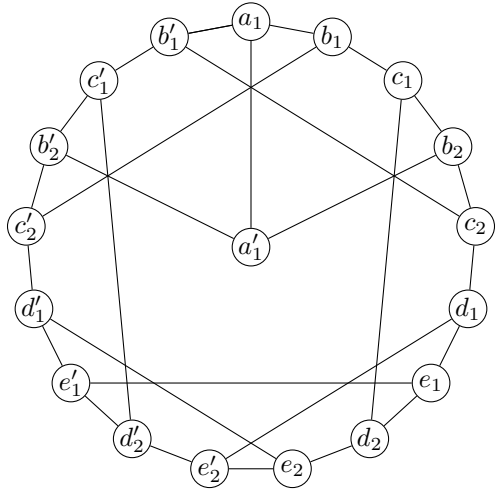

\begin{table}[ht]
    \centering
    \begin{adjustbox}{width=1.1\textwidth,center=\textwidth}
    \scriptsize
    \renewcommand{\arraystretch}{1.2}
    \begin{tabular}{|c|c|c|c|c|c|c|c|c|}
        \hline
        \textbf{r} & \textbf{T} & \textbf{Subtrees} & \textbf{Weights} & \textbf{$N(r)$}  & \textbf{$P(r)$} & \textbf{$|\omega_T|$} & \textbf{$\omega_{\text{min}}$} & \textbf{$\lambda_T$} \\
        \hline
        \multirow{3}{*}{\textbf{$a_1$}}  
        & $T_1$ & $(\bm{a_1'}, \bm{b_2}, \bm{b_2'}, \bm{c_2}, \bm{c_2'}, \bm{d_1}, \bm{d_1'}, \bm{e_1}, \bm{e_2}, \bm{e_1'}, \bm{e_2'})$ & $(8, 4, 4, 2, 2, 1, 1, 1/2, 1/2, 1/2, 1/2)$ & \multirow{3}{*}{$\{a_1', b_1, b_1' \}$} & \multirow{3}{*}{$\{e_1, e_2, e_1', e_2' \}$} & \multirow{3}{*}{$60$} & \multirow{3}{*}{$2$} & \multirow{3}{*}{$30$} \\
        & $T_2$ & $(\bm{b_1}, \bm{c_1}, \bm{d_2}, \bm{e_1}, \bm{e_2}, d_1, d_1', \bm{e_1'}, \bm{e_2'})$ & $(8, 4, 2, 1, 1, 1/2, 1/2, 1/2, 1/2)$ &  &  &  &  & \\
        & $T_3$ & $(\bm{b_1'}, \bm{c_1'}, \bm{d_2'}, \bm{e_1'}, \bm{e_2'}, d_1, d_1', \bm{e_1}, \bm{e_2})$ & $(8, 4, 2, 1, 1, 1/2, 1/2, 1/2, 1/2)$ &  &  &  &  &  \\
        \hline
        \multirow{3}{*}{\textbf{$b_1$}}  
        & $T_1$ & $(\bm{a_1}, \bm{b_1'}, a_1', \bm{c_2}, \bm{c_1'}, \bm{d_1}, \bm{d_2'}, \bm{e_2'})$ & $(8, 4, 5/2, 2, 2, 1, 1, 1/2)$ & \multirow{3}{*}{$\{a_1, c_1, c_2' \}$}  & \multirow{3}{*}{$\{d_1, d_2', e_2' \}$} & \multirow{3}{*}{$65$} & \multirow{3}{*}{$2.5$} & \multirow{3}{*}{$26$} \\
        & $T_2$ & $(\bm{c_1}, \bm{d_2}, b_2, \bm{e_1}, \bm{e_2}, \bm{d_1}, \bm{e_2'}, c_2, \bm{d_2'}, e_1')$ & $(8, 4, 5/2, 2, 2, 1, 1, 1/2, 1/2, 1/2)$ &  &  &  &  &  \\
        & $T_3$ & $(\bm{c_2'}, \bm{d_1'}, b_2', \bm{e_2}, \bm{e_1'}, \bm{d_2'}, \bm{e_2'}, c_1', \bm{d_1}, e_1)$ & $(8, 4, 5/2, 2, 2, 1, 1, 1/2, 1/2, 1/2)$ &  &  &  &  &  \\
        \hline
        \multirow{3}{*}{\textbf{$c_1$}} 
        & $T_1$ & $(\bm{b_1}, \bm{c_2'}, a_1, \bm{b_2'}, d_1', \bm{c_1'}, \bm{d_2'}) $ & $(8, 4, 2, 2, 2, 1, 1/2)$ & \multirow{3}{*}{$\{b_1, b_2, d_2 \}$} & \multirow{3}{*}{$\{c_1', d_2' \}$} & \multirow{3}{*}{$58$} & \multirow{3}{*}{$2$} & \multirow{3}{*}{$29$} \\
        & $T_2$ & $(\bm{b_2}, \bm{c_2}, a_1', \bm{b_1'}, \bm{d_1}, \bm{c_1'}, \bm{e_2'}, \bm{d_2'})$ & $(8, 4, 2, 2, 2, 1, 1, 1/2)$ &  &  &  &  &  \\
        & $T_3$ & $(\bm{d_2}, \bm{e_1}, e_2, \bm{e_1'}, \bm{d_2'}, e_2')$ & $(8, 4, 2, 2, 1, 1)$ &  &  &  &  &  \\
        \hline
        \multirow{3}{*}{\textbf{$d_1$}}  
        & $T_1$ & $(\bm{c_2}, \bm{b_2}, b_1', \bm{a_1'}, \bm{c_1}, a_1, \bm{b_1}, \bm{b_2'}, c_1', \bm{c_2'})$ & $(8, 4, 7/2, 2, 2, 7/4, 1, 1, 1/2, 1/2)$ & \multirow{3}{*}{$\{c_2, e_1, e_2' \}$} & \multirow{3}{*}{$\{b_1, b_2', c_2' \}$} & \multirow{3}{*}{$70.5$} & \multirow{3}{*}{$2.5$} & \multirow{3}{*}{$28.2$} \\
        & $T_2$ & $(\bm{e_1}, \bm{d_2}, \bm{e_1'}, \bm{c_1}, \bm{d_1'}, \bm{b_1}, \bm{c_2'}, a_1, \bm{b_2'})$ & $(8, 4, 4, 2, 2, 1, 1, 1/2, 1/2)$ &  &  &  &  &  \\
        & $T_3$ & $(\bm{e_2'}, \bm{d_2'}, \bm{e_2}, \bm{c_1'}, \bm{d_1'}, \bm{b_2'}, \bm{c_2'}, a_1', \bm{b_1}, a_1)$ & $(8, 4, 4, 2, 2, 1, 1, 1/2, 1/2, 1/4)$ &  &  &  &  &  \\
        \hline
        \multirow{3}{*}{\textbf{$e_1$}}  
        & $T_1$ & $(\bm{d_1}, \bm{c_2}, \bm{b_2}, \bm{b_1'}, \bm{a_1}, \bm{a_1'}, \bm{b_2'})$ & $(8, 4, 2, 2, 1, 1, 1/2)$ & \multirow{3}{*}{$\{d_1, d_2, e_1' \}$} & \multirow{3}{*}{$\{a_1, a_1', b_2' \}$} & \multirow{3}{*}{$58$} & \multirow{3}{*}{$2$} & \multirow{3}{*}{$29$} \\
        & $T_2$ & $(\bm{d_2}, \bm{c_1}, \bm{b_1}, e_2, \bm{a_1}, \bm{c_2'}, \bm{a_1'}, \bm{b_2'})$ & $(8, 4, 2, 2, 1, 1, 1/2, 1/2)$ &  &  &  &  &  \\
        & $T_3$ & $(\bm{e_1'}, \bm{d_2'}, \bm{c_1'}, d_1', e_2', \bm{b_2'}, c_2', \bm{a_1'})$ & $(8, 4, 2, 2, 2, 1, 1, 1/2)$ &  &  &  &  &  \\
        \hline
    \end{tabular}
    \end{adjustbox}
    \caption{Set $\Tau$ of $r$-strategies $T_1$, $T_2$, and $T_3$ for all targets $r$ of the Blanuša graph $B_1$. For each strategy, we omit the root and highlight the vertices of the trunk in bold.}
    \label{table:B1_strategies}
\end{table}

\end{document}